\newcolumntype{L}[1]{>{\RaggedRight\hspace{0pt}}p{#1}}
\newtheorem{theorem}{Theorem}
\newtheorem{lemma}{Lemma}
\begin{document}


\title{On the Stability of Strategic Energy Storage Operation in Wholesale Electricity Markets (Extended Version)}

\author{Aviad Navon, Juri Belikov \IEEEmembership{Senior Member, IEEE}, Ariel Orda \IEEEmembership{Fellow, IEEE}, Yoash Levron \IEEEmembership{Senior Member, IEEE}

\thanks{The work of A. Navon was partially supported by the Israel Department of Energy (PhD fellowship). The work of J. Belikov was partially supported by the Estonian Research Council (grant No. PRG1463), and by the Estonian Centre of Excellence in Energy Efficiency, ENER (grant TK230) funded by the Estonian Ministry of Education and Research.
The work of Y. Levron was partially supported by the Israel Science Foundation. 
A. Navon, A. Orda and Y. Levron are with The Andrew and Erna Viterbi Faculty of Electrical and Computer Engineering, Technion---Israel Institute of Technology, Israel (e-mails: aviad.nav@gmail.com, ariel@ee.technion.ac.il and yoashl@ee.technion.ac.il correspondingly).
J. Belikov is with The Department of Software Science, Tallinn University of Technology, Estonia (email: juri.belikov@taltech.ee).}}





\maketitle
\begin{abstract}

High shares of variable renewable energy necessitate substantial energy storage capacity. However, it remains unclear how to design a market that, on the one hand, ensures a stable and sufficient income for storage firms, and, on the other hand, maintains stable and affordable electricity costs for the consumers. Here, we use a game theoretic model to study storage competition in wholesale electricity markets. A main result is that these types of games are not necessarily stable. In particular, we find that under certain conditions, which imply a combination of a high share of variable renewable energy sources and low flexibility of conventional power plants, the system will not converge to an equilibrium. However, we demonstrate that a price cap on storage price bids can ensure convergence to a stable solution. 
Moreover, we find that when the flexibility of conventional power plants is low, while the storage usage for energy balancing increases with renewable energy generation, the profitability of using storage for the sole purpose of energy arbitrage decreases.

\end{abstract}

\begin{IEEEkeywords}
Energy storage, game theory, electricity market, market power, grid flexibility, generation constraints, variable renewable energy, duopoly, price stability.
\end{IEEEkeywords}

\section{Introduction}

\IEEEPARstart{T}{he} increasing adoption of variable renewable energy sources (VRES) creates a growing need for grid flexibility \cite{IEA2022}. 
Traditionally, fossil fuel power plants provided this flexibility, but as the share of VRES is increasing, power plants are nearing their operational limits. Energy storage is perhaps the most discussed form of grid flexibility, and, indeed, the global capacity of energy storage is rapidly increasing \cite{Katz2020}. For example, the authors of \cite{Holzinger2021} forecast that the global market size of stationary energy storage will grow from approximately \$30B and 50 GWh in 2023 to \$110B and 220 GWh in 2035.

As the capacity of VRES and energy storage are increasing, the operation of energy storage in electricity markets is becoming more complex. While currently energy storage firms often operate as price-takers, as storage capacity grows they are expected to become price-makers and have a stronger impact on the market operation and social welfare \cite{Cruise2018}. 
Another impact of the increase in storage capacity is market saturation and decline in profitability, giving rise to competition \cite{ZHAO2022}. Indeed, profit declines in energy storage services are already occurring in electricity markets worldwide, such as in California \cite{Sacklers2021} and Germany \cite{Figgener2021}. Thus, while energy storage has many potential value streams, 
it remains unclear how to set market design and regulation that balances between the profit of storage, which is necessary to incentivize further investments, and social welfare, which is affected by energy storage operation through the cost of electricity, price stability, and greenhouse emissions. As a result, policy incentives remain few, inhibiting further development and adoption of energy storage \cite{Ferguson2018, Holzinger2021}. 

A large body of literature studies the optimal operation of energy storage systems (ESS). While most papers study the operation of a single ESS \cite{Machlev2020, Abdulla2018, Jin2018}, several papers analyze the optimal operation of multiple ESS and how they affect one another through the price of electricity \cite{Schill2010, Wang2014, Yaagoubi2015, Cruise2018, ZHAO2022, Zheng2023, Smets2023, Gu2024}. The latter is often done with game theory, which is a suitable tool for studying the strategic behavior of multiple energy storage firms \cite{Navon2020}. For example, the authors in \cite{Cruise2018} employ a Cournot model to study competition between multiple energy storage firms and show that their profit is proportional to the number of competing firms. A similar conclusion is reached in \cite{ZHAO2022}, who analyze a game between energy storage investors and show that as the number of investors increases, the energy storage capacity in the system increases, but the capacity and profit of each firm decreases. 

A few papers use game theory to study the strategic interaction between storage firms and the system operator. Modeling such interactions was recently identified as a critical modeling effort to support energy storage integration into wholesale electricity markets \cite{Zheng2023}. For example, in \cite{Huang2019} and \cite{Huang2021}, the authors model a Cournot game between energy storage firms and a system operator, showing that the participation of storage firms in the market always improves social welfare. Moreover, they find that as the number of firms increases to infinity, the social welfare converges to its maximal potential. Also, in \cite{Ojeda2022}, the authors employ a Stackelberg game between storage firms and a system operator and find that storage firms do not use their full storage capacity to increase the value of energy arbitrage, concluding that regulation is required to maximize social welfare. 

The above-mentioned studies have an underlying assumption that storage competition in electricity markets is a stable process. However, they do not fully consider system limitations that may lead to instability. For example, several studies consider limitations of conventional power plants in their framework, such as \cite{Huang2019} and \cite{Huang2021}, but assume that VRES can be freely curtailed to avoid reaching the power plants' generation limitations. While this assumption is valid for a low share of VRES, it is not realistic for scenarios of high VRES, in which curtailment may become more expensive than storage \cite{MIRON2023, SOLOMON2019, Root2017, Cardenas2019}. Thus, the current literature does not adequately address the implications of system limitations on the strategic behavior of storage firms and on the {\em stability} of such games, which may impact market power, storage profitability, and price stability.


In this study, we consider storage competition in wholesale electricity markets using a game theoretic model and show that these types of games are not necessarily stable. To this end, we formulate a game between a system operator and storage firms and consider system constraints that reflect limitations in the daily operation of both conventional power plants and ESS, e.g., conventional generation output, storage capacity, storage periodicity, and energy balance. 
We focus our analysis on scenarios with excess renewable energy generation, i.e., when the net load is lower than the minimal output of conventional power plants, such as when solar generation is high and demand is low. A main result is that a competition between storage firms in wholesale electricity markets is not necessarily stable, i.e., under certain conditions, the system will not converge to an equilibrium point. We provide analytical conditions for such instability. Nonetheless, we demonstrate that price stability can be obtained by limiting the maximal price bid (or the cost of alternative flexibility solutions). We also find that the profitability of energy arbitrage, for energy beyond the amount required for energy balance, decreases as VRES penetration increases. The main contributions of this paper are summarized
as follows.

\begin{itemize}
    \item We formulate a non-cooperative game model between a system operator and energy storage firms, taking into account system constraints associated with both conventional power plants and Energy Storage Systems (ESS). Our model considers limiting factors such as conventional generation output, storage capacity, storage periodicity, and energy balance.
    \item Building upon prior studies, we introduce a periodicity constraint to the model, ensuring a daily charge-discharge cycle for ESS. This addition enhances the realism of the problem formulation, addressing a gap in existing models that may not fully capture the practical aspects of storage operation.
    \item Our investigation reveals that competition among storage firms in wholesale electricity markets may exhibit instability under certain conditions. Analytical conditions are provided, indicating situations where the system fails to converge to an equilibrium point.
    \item Despite the identified instability, our paper demonstrates that price stability can be achieved through interventions such as imposing limits on maximal price bids or reducing the cost of alternative flexibility solutions. These findings suggest potential regulatory or market interventions to enhance overall system stability.
    \item We reveal that the profitability of energy arbitrage, beyond what is necessary for energy balance, diminishes with increasing penetration of Variable Renewable Energy Sources (VRES). This insight sheds light on the economic dynamics associated with the integration of renewable energy sources.
\end{itemize}

The rest of this article is organized as follows. In Section~\ref{sec:prob} the problem formulation is provided. Section~\ref{sec:analysis} provides a formal analysis of the game. Section~\ref{sec:sim_results} investigates a case study of California Independent System Operator (CAISO). Finally, Section~\ref{sec:conclusion} concludes the article.

\section{Problem Formulation}\label{sec:prob}


\subsection{System Model}

We consider a power system that includes the following components:
\begin{itemize}
\item \textbf{Electric Load}. The electric load is a known discrete time signal $p_{D,k}$, which is defined over the time horizon $k = 0, \ldots, K - 1$, and represents the overall electric load.

\item \textbf{Variable Renewable Energy Power Plants}. The total generated power from variable renewable energy power plants, e.g., solar and wind, is a known discrete time signal, which is defined as $p_{RE,k}$, with $k = 0, \ldots, K - 1$.

\item \textbf{Net Load}. The net load is $p_{L,k}$ with $k = 0, \ldots, K - 1$ and represents the sum of the electric load and generation from VRES, i.e., $p_{L,k} = p_{D,k} + p_{RE,k}$, $\forall k \in \{0, \ldots, K - 1\}$. We denote the sum of load over the time horizon $k = 0, \ldots, K - 1$ as $p_{L,sum}$, i.e., $p_{L,sum} = \sum_{j = 0}^{K - 1} p_{L,j}$.
    
\item \textbf{Conventional Power Plants}. The total generated power from conventional power plants, e.g., coal, nuclear, and natural gas, is defined as $p_{g,k}$, with $k = 0, \ldots, K - 1$. It is assumed that this power output is bounded from below, such that $p_{g,k} \ge p_{g,\min }$. Throughout this paper, system flexibility is defined as the conventional power plants' minimal output, also known as the system's Reliability Must Run (RMR) \cite{Didsayabutra2022}. The generated power is associated with a cost $c_g(p_{g,k})$, which represents the aggregated cost of power generation from conventional power plants. We assume that this cost only depends on the momentarily generated power and that it is positive, monotonically increasing and strictly convex, to reflect the fact that more generated power usually requires the use of more expensive and less efficient generation units \cite{Kirschen2004}. 
    
\item \textbf{Energy Storage Systems}. The power flowing into the ESS $m$ is denoted as $p_{m,k}$, and the stored energy as $e_{m,k}$, such that
\begin{equation}
\begin{aligned}
e_{m,k} &= e_{m,0} + \Delta \cdot \left(\sum\limits_{j = 0}^{k - 1} p_{m,j} \right),\quad \text{for } k = 1, \ldots, K, \\
0 &\le e_{m,k} \le e_{m,\max}, \quad \text{for } k = 1, \ldots, K,
\end{aligned}
\end{equation}
where $e_{m,0}$ is the initial energy stored in the ESS of firm $m$, and $\Delta$ is a constant time interval that defines the time resolution of the electricity market.

Each storage system is also characterized by a cost
\begin{equation}
c_m(p_{m,k}) =
\begin{cases}
c_{chg,m}p_{m,k}, & p_{m,k} \ge 0\ \text{(charging)}, \\
c_{dis,m}p_{m,k}, & p_{m,k} < 0\ \text{(discharging)},
\end{cases}
\end{equation}
where $c_{chg,m}$ and $c_{dis,m}$ are the charging and discharging costs, respectively. Since the storage is operated to make a profit, we assume that $0 \le c_{chg,m} \le c_{dis,m}$.
    
We also require the solution to be periodical, to reflect a daily ESS (dis)charge cycle, so another constraint is $\sum_{j = 0}^{K - 1} p_{m,j}  = 0$.
\end{itemize}

\subsection{Power System Operator Optimization Problem}

The system operator dispatches the generators and storage systems to maintain a power balance in the system, such that
\begin{equation}
p_{g,k} = p_{L,k} + \sum_{m = 1}^M p_{m,k}, \ \text{for} \quad k = 0, \ldots, K - 1.
\end{equation}
    
The operator attempts to minimize the total cost of energy, hence it decides which combination of sources (generators or storage systems) to use at every moment, by solving the following optimization problem:

\begin{align}
&\min_{p_{g,k},p_{m,k}} \quad \sum_{k = 0}^{K - 1} \left(c_g(p_{g,k}) - \sum_{m = 1}^M c_m(p_{m,k}) \right) \notag \\
&\text{s.t.} \quad p_{g,k} = p_{L,k} + \sum_{m = 1}^M p_{m,k}, \ \text{for } k = 0, \ldots, K - 1, \label{eq:energyBalance_Cons} \\
&p_{g,k} \ge p_{g,\min}, \quad \text{for } k = 0, \ldots, K - 1, \label{eq:minConv_Cons}\\
&c_m(p_{m,k}) =\begin{cases}
c_{chg,m}p_{m,k}, & p_{m,k} \ge 0\ \text{(charging)}, \\
c_{dis,m}p_{m,k}, & p_{m,k} < 0\ \text{(discharging)},
\end{cases} \label{eq:storageCost_Cons} \\
&\qquad\qquad\qquad \text{for } m = 1, \ldots, M, \\
&e_{m,k} = e_{m,0} + \Delta \cdot\left(\sum_{j = 0}^{k - 1}p_{m,j}\right), \text{ for } k = 1, \ldots, K, \label{eq:SoC_Cons} \\
&0 \le e_{m,k} \le e_{m,\max } \text{ for } k = 1, \ldots, K,\label{eq:EScapacity_Cons} \\
&\sum_{j = 0}^{K - 1} p_{m,j}  = 0, \label{eq:periodicity_Cons}
\end{align}
where \eqref{eq:energyBalance_Cons} is the energy balance constraint, \eqref{eq:minConv_Cons} is the minimal conventional generation constraint, \eqref{eq:storageCost_Cons} is the cost of storage, \eqref{eq:SoC_Cons} is the energy storage state of charge, \eqref{eq:EScapacity_Cons} is the energy storage capacity constraint and \eqref{eq:periodicity_Cons} is the periodicity constraint.
  
\subsection{Reformulation of the Power System Operator Optimization Problem}
\label{subsec:reform_opt}

We proceeded to reformulate the grid operator's optimization problem in order to render it numerically solvable, through the following steps.

\begin{enumerate}
    \item Define new signals ${p_{chg,m,k}},{p_{dis,m,k}}$ such that 
    \begin{align}
       \begin{array}{l}
    {p_{m,k}} = {p_{chg,m,k}} - {p_{dis,m,k}},\;\;{\rm{for}}\;\;\;k = 0 \ldots \left( {K - 1} \right),\\
    {p_{chg,m,k}} \ge 0,\;\;\;{p_{dis,m,k}} \ge 0,\;\;{\rm{for}}\;\;\;k = 0 \ldots \left( {K - 1} \right),
    \end{array} 
    \end{align}
    where ${p_{chg,m,k}}$ and ${p_{dis,m,k}}$ represent the charging and discharging power of storage system $m$ at time interval $k$ correspondingly. We assume that these signals cannot be both positive, meaning that
    \begin{align}
     \begin{array}{l}
    {\forall} m,k,{\rm{ }}\;\;{\rm{if}}\;{p_{chg,m,k}} > 0\;\;{\rm{then}}\;{p_{dis,m,k}} = 0,\\
    {\forall}m,k,{\rm{ }}\;\;{\rm{if}}\;{p_{dis,m,k}} > 0\;\;{\rm{then}}\;{p_{chg,m,k}} = 0.
    \end{array}   
    \end{align}

    \item Define a constant $c_{m}$ that represents the difference between the cost of storage charging and discharging, i.e., ${c_{m}} = {c_{dis,m}} - {c_{chg,m}}$.

    \item Define vectors
 \begin{align}
 \begin{array}{l}
{x} = \left( {\begin{array}{*{20}{c}}
P\\
{{P_{chg}}}
\end{array}} \right),\\
{P} = {\left( \begin{array}{l}
{p_{1,0}}, \ldots ,{p_{M,0}},{p_{1,1}}, \ldots ,{p_{M,1}},\\
 \ldots ,{p_{1,\left( {K - 1} \right)}}, \ldots ,{p_{M,\left( {K - 1} \right)}}
\end{array} \right)^{\rm{T}}},\\
{{{P_{chg}}}} = \\{\left( \begin{array}{l}
{p_{chg,}}_{1,0}, \ldots ,{p_{chg,M,0}},{p_{chg,1,1}}, \ldots ,{p_{chg,M,1}},\\
 \ldots ,{p_{chg,1,\left( {K - 1} \right)}}, \ldots ,{p_{chg,M,\left( {K - 1} \right)}}
\end{array} \right)^{\rm{T}}}.
\end{array}       
    \end{align}

\end{enumerate}

Based on the steps above, the problem can be rewritten as
\begin{align}
\begin{array}{l}
\mathop {\min }\limits_x \;\;\;\;\sum\limits_{k = 0}^{K - 1} {{c_g}\left( {{p_{L,k}} + \sum\limits_{m = 1}^M {{p_{m,k}}} } \right)}  + \sum\limits_{m = 1}^M {{c_{m}}\left( {\sum\limits_{k = 0}^{K - 1} {{p_{chg,m,k}}} } \right)} \\
{\rm{s}}{\rm{.t}}{\rm{.}}\;\;\;\;\;\;\sum\limits_{m = 1}^M {{p_{m,k}}}  \ge {p_{g,\min }} - {p_{L,k}}\;\;\;,\;\;\;\;{\rm{for}}\;\;\;k = 0 \ldots \left( {K - 1} \right)\\
\;\;\;\;\;\;\;\;\;\;{\rm{for}}\;\;\;m = 1 \ldots M:\\
\;\;\;\;\;\;\;\;\;\;{p_{chg,m,k}} \ge 0\;\;\;\;\;\;\;\;\;\;\;\;,\;\;\;\;{\rm{for}}\;\;\;k = 0 \ldots \left( {K - 1} \right)\\
\;\;\;\;\;\;\;\;\;\;{p_{m,k}} - {p_{chg,m,k}} \le 0\;\;\;,\;\;\;\;{\rm{for}}\;\;\;k = 0 \ldots \left( {K - 1} \right)\\
\;\;\;\;\;\;\;\;\;\;{e_{m,k}} = {e_{m,0}} + \Delta  \cdot \left( {\sum\limits_{j = 0}^{k - 1} {{p_{m,j}}} } \right),\;{\rm{for}}\;k = 1 \ldots \left( {K - 1} \right)\\
\;\;\;\;\;\;\;\;\;\;0 \le {e_{m,k}} \le {e_{m,\max }}\;\;\;\;\;,\;\;\;\;{\rm{for}}\;\;\;k = 1 \ldots \left( {K - 1} \right)\\
\;\;\;\;\;\;\;\;\;\;\sum\limits_{j = 0}^{K - 1} {{p_{m,j}}}  = 0.
\end{array}    
\end{align}

To the above reformulation of the optimization problem we add the assumption that the cost ${c_g}\left(  \cdot  \right)$ is quadratic, and is given by
\begin{align}
 \label{eq:quad_cost_f}
    {c_g}\left( x \right) = \frac{1}{2}a{x^2} + b.   
\end{align}

As explained in \cite{Rau2003, Marijana2012}, this is a common and reasonable assumption for the cost function of thermal power plants. Then, this problem can be solved as a quadratic program, which is formulated as
\begin{align}
    \begin{array}{l}
\mathop {\min }\limits_x \;a\sum\limits_{k = 0}^{K - 1} {\left( {\frac{1}{2}{p_{L,k}}^2 + {p_{L,k}}\left( {\sum\limits_{m = 1}^M {{p_{m,k}}} } \right) + \frac{1}{2}{{\left( {\sum\limits_{m = 1}^M {{p_{m,k}}} } \right)}^2}} \right)} \\
\;\;\;\;\;\;\;\;\;\; + \sum\limits_{k = 0}^{K - 1} {\frac{b}{a} \cdot \left( {{p_{L,k}} + \sum\limits_{m = 1}^M {{p_{m,k}}} } \right)} \\
\,\,\,\,\,\,\,\,\,\,\,\,\,\,\, + \sum\limits_{m = 1}^M {\frac{{{c_{m}}}}{a}\left( {\sum\limits_{k = 0}^{K - 1} {{p_{chg,m,k}}} } \right)} \\
{\rm{s}}{\rm{.t}}{\rm{.}}\;\;\;\;\;\;\sum\limits_{m = 1}^M {{p_{m,k}}}  \ge {p_{g,\min }} - {p_{L,k}}\;\;\;\;\;\;,\;{\rm{for}}\;\;\;k = 0 \ldots \left( {K - 1} \right)\\
\;\;\;\;\;\;\;\;\;\;{\rm{for}}\;\;\;m = 1 \ldots M:\\
\;\;\;\;\;\;\;\;\;\;{p_{chg,m,k}} \ge 0\;\;\;\;\;\;\;\;\;\;\;\;\;\;\;\;\;\;\;\;,\;{\rm{for}}\;\;\;k = 0 \ldots \left( {K - 1} \right)\\
\;\;\;\;\;\;\;\;\;\;{p_{m,k}} - {p_{chg,m,k}} \le 0\;\;\;\;\;\;\;\;\;\;\;,\;{\rm{for}}\;\;\;k = 0 \ldots \left( {K - 1} \right)\\
\;\;\;\;\;\;\;\;\;\;{e_{m,k}} = {e_{m,0}} + \Delta  \cdot \left( {\sum\limits_{j = 0}^{k - 1} {{p_{m,j}}} } \right),\;{\rm{for}}\;k = 1 \ldots \left( {K - 1} \right)\\
\;\;\;\;\;\;\;\;\;\;0 \le {e_{m,k}} \le {e_{m,\max }}\;\;\;\;\;\;\;\;\;\;\;\;\;,\;{\rm{for}}\;\;\;k = 1 \ldots \left( {K - 1} \right)\\
\;\;\;\;\;\;\;\;\;\;\sum\limits_{j = 0}^{K - 1} {{p_{m,j}}}  = 0.
\end{array}
\end{align}

Now, since $a$, $\frac{1}{2}a{p_{L,k}}^2$ and $\sum\limits_{k = 0}^{K - 1} b {p_{L,k}}$ are constants and $\sum\limits_{k = 0}^{K - 1} {\sum\limits_{m = 1}^M {{p_{m,k}}}  = 0}$, we get the equivalent problem
\begin{align} \label{eq:ISO_opt}
    \begin{array}{l}
\mathop {\min }\limits_x\;\;\;\;\;\frac{1}{2}\sum\limits_{k = 0}^{K - 1} {{{\left( {\sum\limits_{m = 1}^M {{p_{m,k}}} } \right)}^2}}  + \sum\limits_{k = 0}^{K - 1} {\sum\limits_{m = 1}^M {{p_{L,k}}{p_{m,k}}} } \\
\,\,\,\,\,\,\,\,\,\,\,\,\,\, + \sum\limits_{k = 0}^{K - 1} {\sum\limits_{m = 1}^M {\frac{{{c_{m}}}}{a}{p_{chg,m,k}}} } \\
{\rm{s}}{\rm{.t}}{\rm{.}}\;\;\;\\
\, - \sum\limits_{m = 1}^M {{p_{m,k}}}  \le {p_{L,k}} - {p_{g,\min }},\;{\rm{for}}\;k = 0 \ldots \left( {K - 1} \right)\;\\
{\rm{for}}\;\;\;m = 1 \ldots M:\;\;\;\\
\,{p_{m,k}} - {p_{chg,m,k}} \le 0\;,\;{\rm{for}}\;\;k = 0 \ldots \left( {K - 1} \right)\;\\
\, - {p_{chg,m,k}} \le 0\;\;\;\;\;\;\;\;,\;{\rm{for}}\;\;k = 0 \ldots \left( {K - 1} \right)\;\\
\;\,\sum\limits_{j = 0}^{k - 1} {{p_{m,j}}}  \le \frac{{{e_{m,\max }} - {e_{m,0}}}}{\Delta },\;\;{\rm{for}}\;k = 1 \ldots \left( {K - 1} \right)\;\\
\, - \left( {\sum\limits_{j = 0}^{k - 1} {{p_{m,j}}} } \right) \le \; + \frac{{{e_{m,0}}}}{\Delta },\;\;f{\rm{or}}\;k = 1 \ldots \left( {K - 1} \right)\;\\
\,\sum\limits_{j = 0}^{K - 1} {{p_{m,j}}}  = 0.
\end{array}
\end{align}

This problem can be written in matrix form as

\begin{align}\begin{array}{l}
\min ,\;\;\;\;\;\frac{1}{2}{x^{\rm{T}}}Qx + {r^{\rm{T}}}x\\
{\rm{s}}{\rm{.t}}{\rm{.}}\;\;\;\;\;\;\;\left( {\begin{array}{*{20}{c}}
{{H_1}}\\
{{H_2}}\\
{[{H_3}\;\;{0_{MK \times MK}}]}\\
{[ - {H_3}\;\;{0_{MK \times MK}}]}\\
{[ - {H_4}\;\;{0_{K \times MK}}]}
\end{array}} \right)x \le \left( {\begin{array}{*{20}{c}}
{{0_{MK \times 1}}}\\
{{0_{MK \times 1}}}\\
{{g_2}}\\
{{g_3}}\\
{{g_4}}
\end{array}} \right),\\
\left[ {{H_5}\;\;\;\;{0_{M \times MK}}} \right]x = {0_{M \times 1}},
\end{array}\end{align}
where
\begin{align}\begin{array}{l}
{x_{2MK \times 1}} = \left( {\begin{array}{*{20}{c}}
P\\
{{P_{chg}}}
\end{array}} \right)\\
{\left( P \right)_{MK \times 1}} = {\left( \begin{array}{l}
{p_{1,0}}, \ldots ,{p_{M,0}},{p_{1,1}}, \ldots ,{p_{M,1}},\\
 \ldots ,{p_{1,\left( {K - 1} \right)}}, \ldots ,{p_{M,\left( {K - 1} \right)}}
\end{array} \right)^{\rm{T}}}\\
{\left( {{P_{chg}}} \right)_{MK \times 1}} = \\ {\left( \begin{array}{l}
{p_{chg,}}_{1,0}, \ldots ,{p_{chg,M,0}},{p_{chg,1,1}}, \ldots ,{p_{chg,M,1}},\\
 \ldots ,{p_{chg,1,\left( {K - 1} \right)}}, \ldots ,{p_{chg,M,\left( {K - 1} \right)}}
\end{array} \right)^{\rm{T}}}\\
{\left( {{Q_a}} \right)_{MK \times MK}} = \left( {\begin{array}{*{20}{c}}
{{1_{M \times M}}}&{}&{{0_{M \times M}}}\\
{}& \ddots &{}\\
{{0_{M \times M}}}&{}&{{1_{M \times M}}}
\end{array}} \right)\\
{Q_{2MK \times 2MK}} = \left( {\begin{array}{*{20}{c}}
{{Q_a}}&{{0_{MK \times MK}}}\\
{{0_{MK \times MK}}}&{{0_{MK \times MK}}}
\end{array}} \right) \\ + {\varepsilon _0}{I_{2MK \times 2MK}}
\end{array}\end{align}

where ${\varepsilon _0}$ is very small, and

\begin{align}\begin{array}{l}
{\left( {{r_a}} \right)_{MK \times 1}} = {\left( {\overbrace {{p_{L,0}}, \ldots ,{p_{L,0}}}^{M\;{\rm{times}}}, \ldots ,\overbrace {{p_{L,K - 1}}, \ldots ,{p_{L,K - 1}}}^{M\;{\rm{times}}}} \right)^{\rm{T}}}\\
{\left( {{r_b}} \right)_{MK \times 1}} = \\ {\left( {\overbrace {\left[ {\frac{{{c_{1}}}}{a}, \ldots ,\frac{{{c_{M}}}}{a}} \right], \ldots ,\left[ {\frac{{{c_{1}}}}{a}, \ldots ,\frac{{{c_{M}}}}{a}} \right]}^{K\;{\rm{times}}}} \right)^{\rm{T}}}\\
{r_{2MK \times 1}} = \left( {\begin{array}{*{20}{c}}
{{r_a}}\\
{{r_b}}
\end{array}} \right)\\
{\left( {{H_1}} \right)_{MK \times 2MK}} = \left( {{I_{MK \times MK}}\;\;,\;\; - {I_{MK \times MK}}} \right)\\
{\left( {{H_2}} \right)_{MK \times 2MK}} = \left( {{0_{MK \times MK}}\;\;,\;\; - {I_{MK \times MK}}} \right)\\
{\left( {{H_3}} \right)_{MK \times MK}} = \left( {\begin{array}{*{20}{c}}
{{I_{M \times M}}}&{}&{}&{}\\
{{I_{M \times M}}}&{{I_{M \times M}}}&{}&{}\\
{{I_{M \times M}}}&{{I_{M \times M}}}&{{I_{M \times M}}}&{}\\
 \vdots &{}& \vdots & \ddots 
\end{array}} \right)\\
{\left( {{H_4}} \right)_{K \times MK}} = \left( {\begin{array}{*{20}{c}}
{{1_{1 \times M}}}&{{0_{1 \times M}}}& \cdots &{{0_{1 \times M}}}\\
{{0_{1 \times M}}}&{{1_{1 \times M}}}&{{0_{1 \times M}}}& \cdots \\
 \vdots &{}& \ddots & \vdots \\
{{0_{1 \times M}}}& \cdots &{{0_{1 \times M}}}&{{1_{1 \times M}}}
\end{array}} \right)\\
{\left( {{H_5}} \right)_{M \times MK}} = \left( {\begin{array}{*{20}{c}}
{{I_{M \times M}}}&{{I_{M \times M}}}& \cdots &{{I_{M \times M}}}
\end{array}} \right)\\
{\left( {{g_2}} \right)_{MK \times 1}} = \frac{1}{\Delta }\left( {\begin{array}{*{20}{c}}
{{e_{1,\max }} - {e_{1,0}}}\\
 \vdots \\
{{e_{M,\max }} - {e_{M,0}}}\\
 \vdots \\
{{e_{1,\max }} - {e_{1,0}}}\\
 \vdots \\
{{e_{M,\max }} - {e_{M,0}}}
\end{array}} \right)\;\;\\
{\left( {{g_3}} \right)_{MK \times 1}} = \frac{1}{\Delta }\left( {\begin{array}{*{20}{c}}
{{e_{1,0}}}\\
 \vdots \\
{{e_{M,0}}}\\
 \vdots \\
{{e_{1,0}}}\\
 \vdots \\
{{e_{M,0}}}
\end{array}} \right)\\
{\left( {{g_4}} \right)_{K \times 1}} = \left( {\begin{array}{*{20}{c}}
{{p_{L,0}} - {p_{g,\min }}}\\
 \vdots \\
{{p_{L,K - 1}} - {p_{g,\min }}}
\end{array}} \right)
\end{array}.\end{align}

\subsection{Energy Storage Firms' Optimization Problem}

Each energy storage system $m$ is owned by a storage firm that participates in a wholesale electricity market to gain profit from energy arbitrage. Each storage firm offers the system operator a price bid ${c_{m}}$ for using its storage system, i.e., the cost difference between charging and discharging, and aims to maximize its profit by solving the problem
\begin{equation}
\label{eq:storage_opt}
\begin{aligned}
\max_{c_m} &&& c_m\sum_{k = 0}^{K - 1} p_{chg,m,k} \\
\text{s.t.} &&& \delta  \le {c_m} \le c_{\max},
\end{aligned}
\end{equation}
where $\delta$ and $c_{\max}$ are the minimal and maximal price bids, respectively.

\subsection{Game Model}

We model a non-cooperative game with the following components:
\begin{itemize}
\item \textbf{Players}. Power system operator and $M$ energy storage firms.
\item  \textbf{Strategies}. Each energy storage firm $m$ decides on a price bid $c_{m}$ for using its storage system, i.e., the cost difference between charging and discharging, and the power system operator chooses $x$, i.e., the power flow to each storage system at each time index. 
\item \textbf{Utility functions}. The system operator aims to minimize electricity costs by solving optimization problem~\eqref{eq:ISO_opt}. The energy storage firms aim to maximize their profit by solving optimization problem~\eqref{eq:storage_opt}. The mutual influence between the storage firms and the power system operator is reflected in their utility functions: the storage firms' utility function depends on $\sum_{k = 0}^{K - 1} p_{chg,m,k}$, i.e., the power dispatch by the system operator, which depends on $c_{m} \text{for } m = 1, \ldots, M$, i.e., the firms' price bids, and vice versa. 
\end{itemize}

 
We solve the game using a best-reply algorithm. That is, the players iteratively choose a strategy that minimizes their utility function and the algorithm stops if/when it reaches a Nash Equilibrium (NE), i.e., a solution from which no player can benefit by unilaterally deviating from it \cite{Fudenberg1991}. 
 A flow chart of the game-theoretic model is depicted in Fig.~\ref{fig:FlowChart}.

 \begin{figure*}[htbp]
     \centering
     \includegraphics[width=0.7\textwidth]{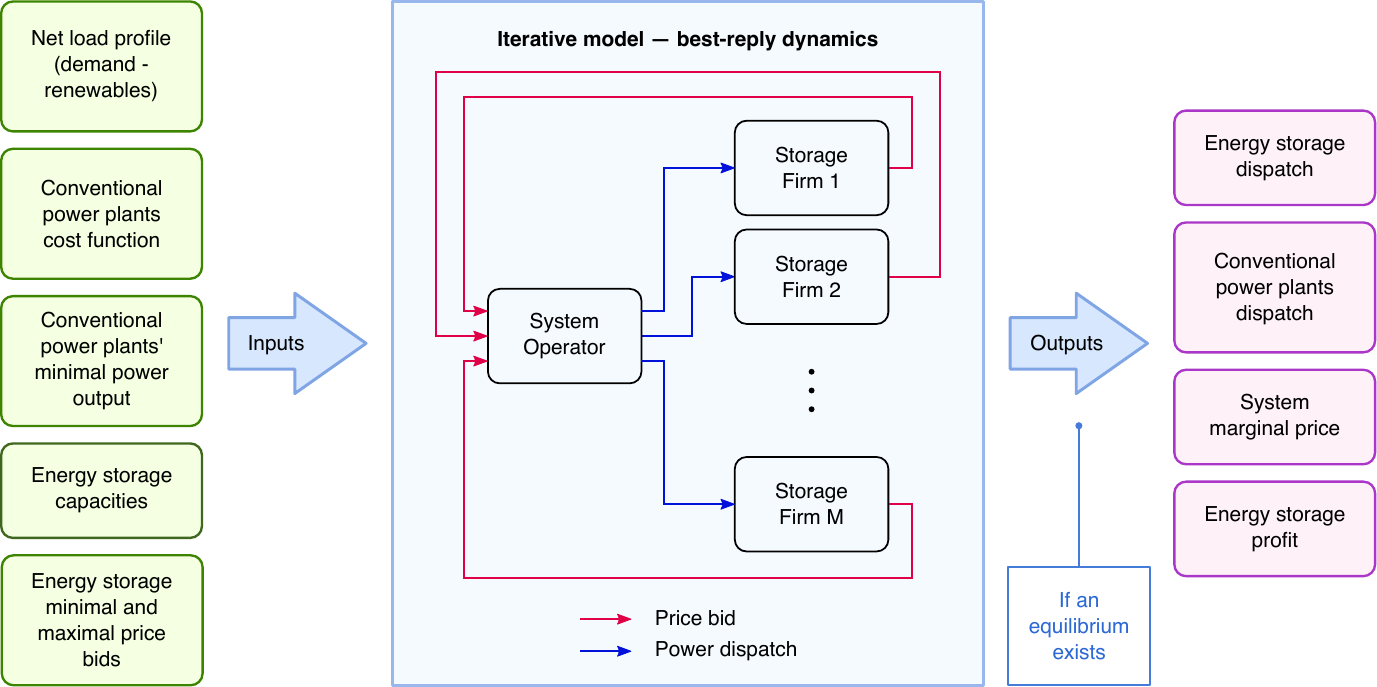}
     \caption{A flow chart of the game-theoretic model.}
     \label{fig:FlowChart}
 \end{figure*}

\section{Formal Analysis}\label{sec:analysis}

In this section, we provide analytical conditions for the instability of the game. In particular, we prove that under certain conditions, the best-response algorithm does not converge. We begin with a set of definitions and assumptions.


\subsection{Definitions}
\begin{itemize}
\item We define a set of time indices, as follows:
\begin{align}
K_{absorb} = \{k \in K \mid 0 \le p_{g,\min} - p_{L,k}\},
\end{align} 
and we denote the minimal and maximal indices in $K_{absorb}$ as $k_{absorb,1}$ and $k_{absorb,2}$, respectively.

\item We define a constant that reflects the excess generated energy in the system, as follows:
\begin{align}
E_{absorb} = \sum_{k \in K_{absorb}} \left(p_{g\min} - p_{L,k}\right).
\end{align}
    
\item We use the notation ``$M/m$'' to denote all storage firms except for storage firm $m$. 
\end{itemize}

\subsection{Assumptions}

\begin{itemize}
\item The net load's minimum is lower than the conventional power plants' minimal output, i.e., $p_{g,\min} > \min (p_{L,k}).$
\item The net load equals the conventional power plants' minimal output twice during the time horizon of the game, i.e., $|\{ k \in K \mid p_{L,k} = p_{g,\min}\}| = 2$.
\item The initial state of charge of all storage systems is zero, i.e., $e_{m,0} = 0$, $m = 1,\ldots,M$.
 \item ${c_{m}},\,m = 1 \ldots M$ are discrete, i.e., ${c_{m}} \in \{ \delta,2 \delta , \ldots ,{c_{\max }} - \delta ,{c_{\max }}\} $, where $ \delta \ll {c_{\max }} $.

\item If two price bids yield an identical profit for a storage firm, it will choose the lower bid, i.e., if $J_m({c^*}_{m}) = J_m({\tilde c_{m}})$ and ${c^*}_{m} < {\tilde c}_{m}$ then storage firm $m$ will offer price bid ${c^*}_{m}$.
\item The players follow best-response dynamics. We denote the action of storage firm $m$ in iteration $t$ of the best-response dynamics as $c_{m}^{(t)}$.
\end{itemize}

\subsection{Instability Theorem}

Next, in Lemmas~\ref{lem:minimal_charge}-\ref{lem:optimal_c_diff}, we incrementally develop the mathematical foundations that are necessary to prove Theorem~\ref{theo:infinite_loop}. Due to space limitations, we omit the proofs of Lemmas~\ref{lem:minimal_charge}-\ref{lem:optimal_c_diff} and direct the reader to (cite arXiv) for their details. 

\begin{lemma}
\label{lem:minimal_charge}
\begin{equation}
\left(p_{g,\min} - p_{L,k}\right)\le \sum_{m = 1}^M p_{m,k} \le \sum_{m = 1}^M p_{chg,m,k}, \ \forall k \in K
\end{equation}
\end{lemma}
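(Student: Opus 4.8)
The plan is to derive both inequalities directly from feasibility of the system operator's dispatch in problem~\eqref{eq:ISO_opt} (equivalently, the original operator problem), since the lemma merely records two consequences of the operator's constraints that hold at every point visited by the best-response dynamics, using nothing about optimality.

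First I would establish the left inequality. At any iteration the operator returns a feasible $x$, so the power-balance equality \eqref{eq:energyBalance_Cons}, $p_{g,k} = p_{L,k} + \sum_{m=1}^M p_{m,k}$, holds together with the minimal-generation bound \eqref{eq:minConv_Cons}, $p_{g,k} \ge p_{g,\min}$. Substituting the former into the latter and rearranging yields $\sum_{m=1}^M p_{m,k} \ge p_{g,\min} - p_{L,k}$ for every $k$; this is precisely the first block of inequality constraints appearing in the reformulated program \eqref{eq:ISO_opt}, so it is automatically satisfied.

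Next I would handle the right inequality. By construction $p_{m,k} = p_{chg,m,k} - p_{dis,m,k}$ with $p_{chg,m,k} \ge 0$ and $p_{dis,m,k} \ge 0$ (equivalently, the reformulated constraint $p_{m,k} - p_{chg,m,k} \le 0$ together with $-p_{chg,m,k}\le 0$). Hence $p_{m,k} \le p_{chg,m,k}$ for each $m$, and summing over $m = 1,\ldots,M$ gives $\sum_{m=1}^M p_{m,k} \le \sum_{m=1}^M p_{chg,m,k}$. Chaining the two bounds proves the claim for all $k \in K$.

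I do not expect a genuine obstacle: the only care required is bookkeeping — matching each inequality to the correct constraint of the operator's (reformulated) program and observing that the argument is pointwise in $k$ and relies solely on feasibility, hence remains valid at every iteration of the best-response algorithm. This is exactly what allows the lemma to be invoked freely in Lemmas~\ref{lem:minimal_charge}--\ref{lem:optimal_c_diff} and in Theorem~\ref{theo:infinite_loop}.
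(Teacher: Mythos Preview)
Your proposal is correct and follows essentially the same approach as the paper: derive the left inequality from the energy-balance constraint \eqref{eq:energyBalance_Cons} combined with the minimal-generation bound \eqref{eq:minConv_Cons}, and the right inequality by summing the reformulated constraint $p_{m,k}-p_{chg,m,k}\le 0$ over $m$. Your additional remark that only feasibility (not optimality) is used is accurate and harmless.
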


\begin{proof}

Recall that the grid operator has an energy balance constraint, i.e., 

\[\;\;\;{p_{g,k}} = {p_{L,k}} + \sum\limits_{m = 1}^M {{p_{m,k}}} \;,\;\;\;\;{\rm{for}}\;\;\;k = 0 \ldots \left( {K - 1} \right)\]

Using ${p_{g,\min }} \le {p_{g,k}}\,{\rm{for}}\;k = 0 \ldots \left( {K - 1} \right)$ we get

\[{p_{g,\min }} - {p_{L,k}} \le {p_{g,k}} - {p_{L,k}} = \sum\limits_{m = 1}^M {{p_{m,k}}} ,\;{\rm{for}}\;k = 0 \ldots \left( {K - 1} \right)\]

We recall that the grid operator’s optimization problem has the constraint 

\[\begin{array}{l}
{\rm{for}}\;\;\;m = 1 \ldots M:\;\;\;\\
\,{p_{m,k}} - {p_{chg,m,k}} \le 0\;,\;{\rm{for}}\;\;k = 0 \ldots \left( {K - 1} \right)
\end{array}\]

Hence overall we get 

\[\left( {{p_{g,\min }} - {p_{L,k}}} \right) \le \sum\limits_{m = 1}^M {{p_{m,k}}}  \le \sum\limits_{m = 1}^M {{p_{chg,m,k}}}, {\rm{ }}\forall k \in K\]

\end{proof}

\begin{lemma}
\label{lem:f_bounded}
There exists a positive constant $W$ such that $-W \le f(P) \le W$, where
\begin{equation}
f(P) = \frac{1}{2}\sum_{k = 0}^{K - 1} \left(\sum_{m = 1}^M p_{m,k}\right)^2 + \sum_{k = 0}^{K - 1} \sum_{m = 1}^M p_{L,k}p_{m,k}.
\end{equation}
\end{lemma}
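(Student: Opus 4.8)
The plan is to show that every feasible storage-power vector $P$ lies in a bounded set determined solely by the fixed problem data, and then to bound $f(P)$ term by term by a constant. Since the horizon length $K$ and the number of firms $M$ are finite, it suffices to produce a uniform bound on each individual power $p_{m,k}$ and on the net load $p_{L,k}$.

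First, the net load $p_{L,k}$ is a known, fixed signal on the finite index set $k = 0,\ldots,K-1$, so $P_L^{\max} := \max_{0 \le k \le K-1}|p_{L,k}|$ is a finite constant. Next, I would bound the storage powers using the state-of-charge relation $e_{m,k} = e_{m,0} + \Delta\sum_{j=0}^{k-1}p_{m,j}$. With the assumption $e_{m,0} = 0$ and the capacity constraints $0 \le e_{m,k} \le e_{m,\max}$, the partial sums $S_{m,k} := \sum_{j=0}^{k-1}p_{m,j}$ satisfy $0 \le S_{m,k} \le e_{m,\max}/\Delta$ for $k = 1,\ldots,K-1$, while $S_{m,0} = 0$ trivially and $S_{m,K} = 0$ by the periodicity constraint $\sum_{j=0}^{K-1}p_{m,j}=0$. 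Since $p_{m,k} = S_{m,k+1} - S_{m,k}$ is the difference of two quantities lying in $[0, e_{m,\max}/\Delta]$, we obtain $|p_{m,k}| \le e_{m,\max}/\Delta =: p_m^{\max}$ for all $m$ and all $k$, and hence $\sum_{m=1}^{M}|p_{m,k}| \le \sum_{m=1}^{M} p_m^{\max} =: P^{\max}$.

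Combining these estimates,
\[
|f(P)| \;\le\; \frac{1}{2}\sum_{k=0}^{K-1}\Bigl(\sum_{m=1}^{M}|p_{m,k}|\Bigr)^{2} + \sum_{k=0}^{K-1}|p_{L,k}|\sum_{m=1}^{M}|p_{m,k}| \;\le\; \frac{K}{2}\,(P^{\max})^{2} + K\,P_L^{\max}\,P^{\max} \;=:\; W,
\]
which is finite, so $-W \le f(P) \le W$. There is no real obstacle in this argument; the only point requiring slight care is fixing the exact range of indices over which the partial-sum bound holds — in particular, using the periodicity constraint to close the cycle at $k = K$ — and observing that boundedness of the partial sums transfers to boundedness of each $p_{m,k}$ because consecutive partial sums differ precisely by $p_{m,k}$.
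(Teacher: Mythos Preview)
Your proof is correct and rests on the same core idea as the paper's: bound each $p_{m,k}$ via the state-of-charge constraints, then conclude that $f$ is bounded. The paper carries out the first step by an induction on $k$ (each partial sum lies in $[0,e_{m,\max}/\Delta]$, so subtracting the bounded partial sum $\sum_{j=0}^{t-1}p_{m,j}$ from the constraint at level $t+1$ bounds $p_{m,t}$), and then finishes by invoking the extreme value theorem: $f$ is continuous and $P$ lies in a compact set, so $f$ is bounded. Your version is more direct on both counts: you note at once that $p_{m,k}=S_{m,k+1}-S_{m,k}$ is the difference of two numbers in $[0,e_{m,\max}/\Delta]$, which dispenses with the induction, and you then bound $|f(P)|$ explicitly term by term rather than appealing to compactness. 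The payoff is an explicit constant $W=\tfrac{K}{2}(P^{\max})^{2}+K\,P_L^{\max}P^{\max}$ in terms of the problem data, whereas the paper's compactness argument is nonconstructive on this point; otherwise the two proofs are equivalent.
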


\begin{proof}
We first prove by induction on $k$ that for $k = 0 \ldots \left( {K - 1} \right), m = 1...M$, the variables ${p_{m,k}}$ are bounded, that is, there exists a compact set $S$ such that $P \in S$.

\textit{Induction proposition:} for every $n \in \mathbb{N}_+$, if \[0 \le \sum\limits_{j = 0}^{k - 1} {{p_{m,j}}}  \le \frac{{{e_{m,\max }}}}{\Delta }\,{\rm{for}}\,k = 1 \ldots \left( {n + 1} \right)\] then the variables ${p_{m,k}}$ where $k = 0 \ldots n$ are bounded. 

\textit{base case:} for $n=0$ if \[0 \le \sum\limits_{j = 0}^{k - 1} {{p_{m,j}}}  \le \frac{{{e_{m,\max }}}}{\Delta }\,{\rm{for}}\,k = 1\] then clearly ${p_{m,0}}$ is bounded.

\textit{Induction step:} Assume that for $k = t - 1$ if \[ - \frac{{{e_{m,0}}}}{\Delta }\, \le \sum\limits_{j = 0}^{k - 1} {{p_{m,j}}}  \le \frac{{{e_{m,\max }} - {e_{m,0}}}}{\Delta }\,{\rm{for}}\,k = 1 \ldots t\] then the variables ${p_{m,k}}, k = 0 \ldots \left( {t - 1} \right)$ are bounded.

If \[ - \frac{{{e_{m,0}}}}{\Delta }\, \le \sum\limits_{j = 0}^{k - 1} {{p_{m,j}}}  \le \frac{{{e_{m,\max }} - {e_{m,0}}}}{\Delta }\,{\rm{for}}\,k = 1 \ldots \left( {t + 1} \right)\] then based on our assumption the variables ${p_{m,k}}, k = 0 \ldots  {t} $ are bounded. In addition, by subtracting $\sum\limits_{j = 0}^{t - 1} {{p_{m,j}}}$ from both sides of $ - \frac{{{e_{m,0}}}}{\Delta }\, \le \sum\limits_{j = 0}^t {{p_{m,j}}}  \le \frac{{{e_{m,\max }} - {e_{m,0}}}}{\Delta }\,$ we get 

\[ - \frac{{{e_{m,0}}}}{\Delta }\, - \sum\limits_{j = 0}^{t - 1} {{p_{m,j}}}  \le {p_{m,t}} \le \frac{{{e_{m,\max }} - {e_{m,0}}}}{\Delta } - \sum\limits_{j = 0}^{t - 1} {{p_{m,j}}} \]

and since $\sum\limits_{j = 0}^{t - 1} {{p_{m,j}}}$ is bounded then the variable ${p_{m,t}}$ is bounded as well. 

Based on both the problem’s constraint \[\; - \frac{{{e_{m,0}}}}{\Delta }\, \le \sum\limits_{j = 0}^{k - 1} {{p_{m,j}}}  \le \frac{{{e_{m,\max }} - {e_{m,0}}}}{\Delta },\;\;{\rm{for}}\;k = 1 \ldots K,\;\] and the induction preposition, it follows that the variables ${p_{m,k}},\,k = 0 \ldots K - 1,$ are bounded, and more specifically, there exists a compact set $S$ such that $P \in S$.

Finally, since $f(P)$ is continuous and $P$ belongs to a compact set then there exists a positive constant $W$ such that $ - W \le f(P) \le W$ \cite{Protter1977}, where \[f(P) = \frac{1}{2}\sum\limits_{k = 0}^{K - 1} {{{\left( {\sum\limits_{m = 1}^M {{p_{m,k}}} } \right)}^2}}  + \sum\limits_{k = 0}^{K - 1} {\sum\limits_{m = 1}^M {{p_{L,k}}{p_{m,k}}} }.\]
\end{proof}

\begin{lemma}
\label{lem:optimal_charge_over_m}
Consider an optimal solution $x^*$. There exists a constant $c_{\min}$ for which if $c_{m} \ge c_{\min}\ \forall m$ then it holds for the optimal solution that
\begin{equation}
\sum_{m = 1}^M p_{chg,m,k}^*  = \max(0,p_{g,\min} - p_{L,k}),\ \forall k \in K.
\end{equation}
\end{lemma}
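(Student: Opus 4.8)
The plan is to sandwich $\sum_{m=1}^M p_{chg,m,k}^*$ between two matching bounds at every $k$. The inequality $\sum_{m=1}^M p_{chg,m,k}^*\ge\max(0,p_{g,\min}-p_{L,k})$ already holds at \emph{any} feasible point: Lemma~\ref{lem:minimal_charge} gives $\sum_m p_{chg,m,k}^*\ge\sum_m p_{m,k}^*\ge p_{g,\min}-p_{L,k}$, and $\sum_m p_{chg,m,k}^*\ge 0$ since every $p_{chg,m,k}^*\ge 0$. Hence the content of the lemma is the reverse inequality, and this is exactly where the hypothesis $c_{m}\ge c_{\min}$ enters: charging carries a strictly positive marginal cost $c_m/a$ in the operator's objective, so an optimal operator will not charge more than it is forced to by $\sum_m p_{m,k}\ge p_{g,\min}-p_{L,k}$, provided this cost is large enough to outweigh any reduction in the quadratic term it could otherwise buy.

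To make this precise I first record two facts. (i) At the optimum the charging variables are pinned: $p_{chg,m,k}^*=\max(0,p_{m,k}^*)$, since the coefficient of $p_{chg,m,k}$ in the objective is $c_m/a>0$ and its only lower constraints are $p_{chg,m,k}\ge 0$ and $p_{chg,m,k}\ge p_{m,k}$, so any slack could be removed to strictly improve the objective; thus $\sum_m p_{chg,m,k}^*$ equals the total charging power at time $k$. (ii) The feasible region of the operator's QP lies in a compact set $S$ — the very set built in the proof of Lemma~\ref{lem:f_bounded} — and $S$ does not depend on the bids $c_m$; hence $p_{g,k}=p_{L,k}+\sum_m p_{m,k}$ satisfies $p_{g,\min}\le p_{g,k}\le\bar p_g$ on $S$ for a constant $\bar p_g$ independent of the $c_m$. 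I then take $c_{\min}$ to be any constant with $c_{\min}>a\,(\bar p_g-p_{g,\min})$.

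Now suppose for contradiction that $\sum_m p_{chg,m,k_0}^*>\max(0,p_{g,\min}-p_{L,k_0})$ for some $k_0$. Then the total charge at $k_0$ is positive, so some firm $m_1$ has $p_{m_1,k_0}^*>0$. Since $e_{m_1,0}=0$ and $\sum_j p_{m_1,j}^*=0$, the stored energy of $m_1$ is nonnegative, starts and ends at $0$, hence is strictly positive right after $k_0$ and must return to $0$; so there is a first $k_d>k_0$ with $p_{m_1,k_d}^*<0$, and all partial sums $\sum_{j<\ell}p_{m_1,j}^*$ for $k_0<\ell\le k_d$ are strictly positive. Perturb the optimum by adding $-\varepsilon$ to $p_{m_1,k_0}$ and $+\varepsilon$ to $p_{m_1,k_d}$ (shrinking one charge and the matching discharge): for small $\varepsilon>0$ this preserves $m_1$'s periodicity, preserves all of $m_1$'s capacity constraints (the only affected partial sums, those on $(k_0,k_d]$, merely decrease from strictly positive values), keeps $p_{chg,m_1,k_d}^*=0$, and lowers $p_{chg,m_1,k_0}^*$ by exactly $\varepsilon$. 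The one remaining feasibility concern is energy balance at $k_0$: if $\sum_m p_{m,k_0}^*>p_{g,\min}-p_{L,k_0}$ it survives small $\varepsilon$; if it is tight, then the total charge at $k_0$ strictly exceeds the net, so some firm $m_2$ discharges at $k_0$, and one additionally adds $+\varepsilon$ to $p_{m_2,k_0}$ and re-balances $m_2$'s cycle at a later discharge instant so that $\sum_m p_{m,k_0}$ is unchanged. In either sub-case the net signal $S_k=\sum_m p_{m,k}$ changes only at finitely many instants and only by $\pm\varepsilon$, so the change of $f(P)$ equals $\varepsilon\,(p_{g,k_d}-p_{g,k_0})+O(\varepsilon^2)\le\varepsilon(\bar p_g-p_{g,\min})+O(\varepsilon^2)$ (the $\varepsilon_0$-regularizer contributes only a further $O(\varepsilon^2)$, and in fact decreases), whereas the charging cost drops by at least $\tfrac{c_{m_1}}{a}\varepsilon>(\bar p_g-p_{g,\min})\,\varepsilon$. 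Hence the objective strictly decreases, contradicting optimality of $x^*$; therefore $\sum_m p_{chg,m,k}^*\le\max(0,p_{g,\min}-p_{L,k})$ for all $k$, which together with the lower bound is the claimed equality.

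The step I expect to be the real obstacle is the feasibility bookkeeping of the perturbation in the ``tight'' sub-case — re-balancing firm $m_2$'s periodicity (and possibly recursing through a chain of firms) without violating its own capacity bounds or another energy-balance inequality. It is precisely to absorb these bounded worst-case slacks, uniformly in the bids, that one needs $c_{\min}$ strictly above $a(\bar p_g-p_{g,\min})$ together with the compactness supplied by Lemma~\ref{lem:f_bounded}; once that constant is fixed, the sign computation above closes the argument.
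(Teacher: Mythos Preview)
Your argument is correct and in fact more explicit than the paper's. Both proofs share the same skeleton: the lower bound $\sum_m p_{chg,m,k}^*\ge\max(0,p_{g,\min}-p_{L,k})$ comes from Lemma~\ref{lem:minimal_charge}, and the upper bound is obtained by a perturbation-and-contradiction argument whose size is controlled via the compactness supplied by Lemma~\ref{lem:f_bounded}. The difference is in how the perturbation is carried out. The paper simply asserts the existence of a feasible direction $d$ along which $\tilde x=x^*-td$ has every $p_{chg,m,k}$ with $k\in\tilde K$ reduced by $t$, then bounds $(f(P^*-td)-f(P^*))/t$ by a global gradient bound $W_2$ obtained from smoothness of $f$ and compactness of $S$, and takes $c_{\min}$ large enough that $\sum_{k\in\tilde K}\sum_m c_m\le W_2$ is violated. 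You instead build the perturbation concretely as a charge--discharge swap inside a single firm (decrease $p_{m_1,k_0}$, increase $p_{m_1,k_d}$), compute the first-order change in $f$ as $\varepsilon(p_{g,k_d}-p_{g,k_0})$, and read off the explicit threshold $c_{\min}>a(\bar p_g-p_{g,\min})$; along the way you establish inline what the paper isolates later as Lemma~\ref{lem:pch_equals_p_allk}. What your route buys is a transparent constant and a genuine feasibility check in the generic case; the price is the ``tight'' sub-case you flag, where one firm charges while another discharges at the same instant and you must adjust a second firm's schedule. The paper's abstract direction $d$ sweeps exactly this issue under the rug---it never verifies that $\tilde x$ is feasible---so you are not missing anything the paper actually supplies; your acknowledged obstacle is real for both arguments, and your proposed resolution via compactness is the correct one.
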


\begin{proof}
The original grid operator optimization problem can be viewed as a specific instance of the following problem:

\[\begin{array}{l}
\min .\;\;\;\;\;f\left( P \right)\, + h({P_{chg}})\\
{\rm{s}}{\rm{.t}}{\rm{.}}\;\\
{P_{chg}} \ge 0\\
P \in S\left( {{P_{chg}}} \right)
\end{array}\]

where

\[\begin{array}{l}
{\left( P \right)_{MK \times 1}} = {\left( \begin{array}{l}
{p_{1,0}}, \ldots ,{p_{M,0}},{p_{1,1}}, \ldots ,{p_{M,1}},\\
 \ldots ,{p_{1,\left( {K - 1} \right)}}, \ldots ,{p_{M,\left( {K - 1} \right)}}
\end{array} \right)^{\rm{T}}}\\
{\left( {{P_{chg}}} \right)_{MK \times 1}} = \\
{\left( \begin{array}{l}
{p_{chg,}}_{1,0}, \ldots ,{p_{chg,M,0}},{p_{chg,1,1}}, \ldots ,{p_{chg,M,1}},\\
 \ldots ,{p_{chg,1,\left( {K - 1} \right)}}, \ldots ,{p_{chg,M,\left( {K - 1} \right)}}
\end{array} \right)^{\rm{T}}}\\
f(P) = \frac{1}{2}\sum\limits_{k = 0}^{K - 1} {{{\left( {\sum\limits_{m = 1}^M {{p_{m,k}}} } \right)}^2}}  + \sum\limits_{k = 0}^{K - 1} {\sum\limits_{m = 1}^M {{p_{L,k}}{p_{m,k}}} } \\
h({P_{chg}}) = \sum\limits_{k = 0}^{K - 1} {\sum\limits_{m = 1}^M {{c_{m}}{p_{chg,m,k}}} } 
\end{array}\]

and $S$ is a convex set.

Consider an optimal solution ${x^*}$. To prove that

\[\sum\limits_{m = 1}^M {{p_{chg,m,k}}^*}  = \max (0,{p_{g,\min }} - {p_{L,k}})\,\forall k \in K,\]

we recall that 

\[\sum\limits_{m = 1}^M {{p_{chg,m,k}}^*}  \ge \max (0,{p_{g,\min }} - {p_{L,k}}), \forall k \in K,\]

(Lemma \ref{lem:minimal_charge}) and assume by negation that for the optimal solution ${x^*}$ it holds that $\sum\limits_{m = 1}^M {{p_{chg,m,k}}^*}  > \max (0,{p_{g,\min }} - {p_{L,k}})$ for some time indices $k \in \tilde K$, where $\tilde K \subseteq K$. Consequently, there exists a feasible solution $\tilde x$ that is slightly smaller than ${x^*}$ for some time indices $k \in \tilde K$, i.e., $\tilde x = {x^*} - td$ where $d \in {R^n},\,\,t \in R,\,\,t \to 0$.

Since ${x^*}$ is optimal, we can claim that if we change ${x^*}$ to $\tilde x$, then the objective function will increase, that is

\[f\left( {{P^*}} \right)\, + h({P_{chg}}^*) \le f\left( {\tilde P} \right)\, + h({\tilde P_{chg}}),\]

which can be more explicitly written as

\[\begin{array}{l}
f\left( {{P^*}} \right)\, + \sum\limits_{k = 0}^{K - 1} {\sum\limits_{m = 1}^M {{c_{m}}{p^*}_{chg,m,k}} } \\
 \le f\left( {{P^*} - td} \right)\, + \sum\limits_{k \notin \tilde K} {\sum\limits_{m = 1}^M {{c_{m}}{p^*}_{chg,m,k}} } \\
 + \sum\limits_{k \in \tilde K} {\sum\limits_{m = 1}^M {{c_{m}}\left( {{p^*}_{chg,m,k} - t} \right)} } 
\end{array}.\]

By rearranging and canceling out elements on both sides we get

\[\sum\limits_{k \in \tilde K} {\sum\limits_{m = 1}^M {{c_{m}}} }  \le \frac{{f\left( {{P^*} - td} \right) - f\left( {{P^*}} \right)\,\,}}{t}.\]

Since $f$ is quadratic then it is smooth \cite{Boyd2004}, i.e.,

\[\left\| {\nabla f({P_1}) - \nabla f({P_2})} \right\| \le 2\left\| {{P_1} - {P_2}} \right\|\,\forall {P_1},{P_2} \in S,\]

and since $S$ is a compact set then $\nabla f(P)$ is bounded.
Since $f$ is continuously differentiable over $S$ then 

\[\begin{array}{l}
\nabla f{(P)^T}d = f'(P;d) = \,\mathop {\lim }\limits_{t \to {0^ + }} \frac{{f(P + td) - f(P)}}{t},\\
\forall P \in S,d \in {R^n}.
\end{array}\]

It follows that $\mathop {\lim }\limits_{t \to {0^ + }} \frac{{f(P + td) - f(P)}}{t},\,P \in S,d \in {R^n}$ is bounded by ${W_2}$.

By taking $t$ to zero it follows from the inequality above that 

\[\sum\limits_{k \in \tilde K} {\sum\limits_{m = 1}^M {{c_{m}}} }  \le \mathop {\lim }\limits_{t \to 0} \frac{{f\left( {{P} - td} \right)\, - f\left( P \right)}}{t} \le {W_2},\]

hence there exists a constant ${c_{\min }}$ such that if ${c_{m}} \ge {c_{\min }}\,\forall m$ then the inequality above leads to a contradiction.
\end{proof}

\begin{lemma}
\label{lem:pch_equals_p_allk}
It holds for any optimal solution that
\begin{equation}
p^*_{chg,m,k} =
\begin{cases}
p^*_{m,k}, & \text{if } p^*_{m,k} \ge 0, \\
0, & \text{else}.
\end{cases}
\end{equation}
\end{lemma}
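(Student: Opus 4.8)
The plan is to observe that the charging variables $p_{chg,m,k}$ play the role of one-sided slacks: in the reformulated quadratic program \eqref{eq:ISO_opt} they appear in the objective only through the strictly positive linear term $\sum_{k}\sum_{m}\frac{c_m}{a}\,p_{chg,m,k}$ (plus the negligible regularizer $\tfrac{\varepsilon_0}{2}p_{chg,m,k}^2$, which is nondecreasing for nonnegative arguments), and among the constraints they are restricted only by $p_{chg,m,k}\ge 0$ and by $p_{m,k}-p_{chg,m,k}\le 0$. The RMR/energy-balance inequality, the state-of-charge bounds, and the periodicity equality all involve the block $P$ alone. Consequently, once $P$ is fixed, the objective is minimized over $P_{chg}$ coordinate by coordinate by pushing each $p_{chg,m,k}$ down to its greatest lower bound $\max(0,p_{m,k})$.

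To turn this into a proof I would argue by contradiction. Let $x^*$ be optimal and suppose $p^*_{chg,m,k}>\max(0,p^*_{m,k})$ for some pair $(m,k)$. Form $\tilde x$ by replacing that single coordinate with $\max(0,p^*_{m,k})$ and leaving everything else unchanged. Since $\max(0,p^*_{m,k})\ge 0$ and $\max(0,p^*_{m,k})\ge p^*_{m,k}$, the modified point still satisfies $p_{chg,m,k}\ge 0$ and $p_{m,k}-p_{chg,m,k}\le 0$; every other constraint is untouched because it does not involve this coordinate. Hence $\tilde x$ is feasible. Because $c_m/a>0$ (recall $c_m=c_{dis,m}-c_{chg,m}$ with $c_m\ge\delta>0$) and the quadratic regularizer term does not increase when $p_{chg,m,k}$ is lowered toward a nonnegative value, the objective strictly decreases at $\tilde x$, contradicting optimality of $x^*$.

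Therefore every optimal solution satisfies $p^*_{chg,m,k}=\max(0,p^*_{m,k})$ for all $m,k$, which is precisely the asserted piecewise identity: $p^*_{chg,m,k}=p^*_{m,k}$ when $p^*_{m,k}\ge 0$ and $p^*_{chg,m,k}=0$ otherwise. The only point that needs care is the feasibility check for the perturbation, namely confirming that $P_{chg}$ is coupled to $P$ solely through the one-sided inequality $p_{m,k}\le p_{chg,m,k}$ and is otherwise only bounded below by $0$; together with the monotonicity of the objective in $P_{chg}$ this makes the argument go through, and both facts are immediate from the structure of \eqref{eq:ISO_opt}. Equivalently, one could invoke the KKT conditions: the stationarity equation for $p_{chg,m,k}$ has left-hand side $c_m/a+\varepsilon_0 p_{chg,m,k}>0$, so by complementary slackness at least one of the constraints $-p_{chg,m,k}\le 0$ and $p_{m,k}-p_{chg,m,k}\le 0$ must be active, which pins $p_{chg,m,k}$ to $\max(0,p_{m,k})$ and gives the same conclusion.
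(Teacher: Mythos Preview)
Your argument is correct. The paper, however, takes the KKT route as its \emph{primary} proof: it writes down the full Lagrangian, derives the stationarity condition $\tfrac{c_m}{a}=\mu_{m,k}^{(1)}+\mu_{m,k}^{(2)}$ for the $p_{chg,m,k}$ block, observes that since $c_m>0$ the two multipliers cannot vanish simultaneously, and then uses complementary slackness together with the feasibility inequalities $p_{chg,m,k}\ge 0$ and $p_{m,k}\le p_{chg,m,k}$ to force $p^*_{chg,m,k}\in\{0,p^*_{m,k}\}$ and hence $p^*_{chg,m,k}=\max(0,p^*_{m,k})$. This is exactly the alternative you sketch in your last sentence. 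Your main line of argument---the one-coordinate improvement based on the observation that $P_{chg}$ is coupled to the rest of the program only through the slack inequality $p_{m,k}\le p_{chg,m,k}$ and the sign constraint---is more elementary: it avoids setting up the Lagrangian and the verification of Slater's condition, and it works directly even in the presence of the $\varepsilon_0$ regularizer (which the paper's KKT computation silently drops). The KKT approach, on the other hand, is the one the paper reuses downstream, since the multipliers $\mu^{(1)},\mu^{(2)}$ and the stationarity equations are needed anyway in the proof of Lemma~\ref{lem:pch_equals_p_kabsorb}.
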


\begin{proof}
Recall the optimization problem 

\[\begin{array}{l}
\min .\;\;\;\;\;\frac{1}{2}\sum\limits_{k = 0}^{K - 1} {{{\left( {\sum\limits_{m = 1}^M {{p_{m,k}}} } \right)}^2}}  + \sum\limits_{k = 0}^{K - 1} {\sum\limits_{m = 1}^M {{p_{L,k}}{p_{m,k}}} } \\
 + \sum\limits_{k = 0}^{K - 1} {\sum\limits_{m = 1}^M {\frac{{{c_{m}}}}{a}{p_{chg,m,k}}} } \\
{\rm{s}}{\rm{.t}}{\rm{.}}\;\;\;\\
\, - \sum\limits_{m = 1}^M {{p_{m,k}}}  \le {p_{L,k}} - {p_{g,\min }},\;{\rm{for}}\;k = 0 \ldots \left( {K - 1} \right)\;\\
{\rm{for}}\;\;\;m = 1 \ldots M:\;\;\;\\
\,{p_{m,k}} - {p_{chg,m,k}} \le 0\;,\;{\rm{for}}\;\;k = 0 \ldots \left( {K - 1} \right)\;\\
\, - {p_{chg,m,k}} \le 0\;\;\;\;\;\;\;\;,\;{\rm{for}}\;\;k = 0 \ldots \left( {K - 1} \right)\;\\
\;\,\sum\limits_{j = 0}^{k - 1} {{p_{m,j}}}  - \frac{{{e_{m,\max }}}}{\Delta } \le 0,\;\;{\rm{for}}\;k = 1 \ldots K\;\\
 - \left( {\sum\limits_{j = 0}^{k - 1} {{p_{m,j}}} } \right) \le \;0,\;\;f{\rm{or}}\;k = 1 \ldots K\;\\
\,\sum\limits_{j = 0}^{K - 1} {{p_{m,j}}}  = 0\;\;\;
\end{array}\]

All the constraints are affine and Slater condition holds, hence a solution is optimal if and only if KKT conditions hold.

A solution ${x^*}$ is called a KKT point if it is feasible and there exist Lagrange multipliers $\mu  \ge 0,\,\lambda$, such that:

\begin{enumerate}
    \item ${\nabla _x}L(x,\lambda ,\mu ) = 0$ where
    
    \[\begin{array}{l}
    L(x,\lambda ,\mu ) = \\
    \frac{1}{2}\sum\limits_{k = 0}^{K - 1} {{{\left( {\sum\limits_{m = 1}^M {{p_{m,k}}} } \right)}^2}}  + \sum\limits_{k = 0}^{K - 1} {\sum\limits_{m = 1}^M {{p_{L,k}}{p_{m,k}}} } \\
     + \sum\limits_{k = 0}^{K - 1} {\sum\limits_{m = 1}^M {\frac{{{c_{m}}}}{a}{p_{chg,m,k}}} } \\
     + \sum\limits_{m = 1}^M {{\lambda _m}\sum\limits_{k = 0}^{K - 1} {{p_{m,k}}} }  + \sum\limits_{m = 1}^M {\sum\limits_{k = 0}^{K - 1} {{\mu _{m,k}}^{(1)}\left( { - {p_{chg,m,k}}} \right)} } \\
     + \sum\limits_{m = 1}^M {\sum\limits_{k = 0}^{K - 1} {{\mu _{m,k}}^{(2)}\left( {{p_{m,k}} - {p_{chg,m,k}}} \right)} } \\
     + \sum\limits_{k = 0}^{K - 1} {{\mu _k}^{(3)}\left( { - \sum\limits_{m = 1}^M {{p_{m,k}}}  + {p_{g,\min }} - {p_{L,k}}} \right)} \\
     + \sum\limits_{m = 1}^M {\sum\limits_{k = 1}^K {{\mu _{m,k - 1}}^{(4)}\left( {\sum\limits_{j = 0}^{k - 1} {{p_{m,j}}}  - \frac{{{e_{\max }}}}{\Delta }} \right)} } \\
     + \sum\limits_{m = 1}^M {\sum\limits_{k = 1}^K {{\mu _{m,k - 1}}^{(5)}\left( { - \sum\limits_{j = 0}^{k - 1} {{p_{m,j}}} } \right)} } ,
    \end{array}\]

    \item complementary-slackness conditions hold:
     
    \[\begin{array}{l}
    {\rm{for}}\,m = 1 \ldots M:\\
    \,{\mu ^*}{_{m,k}^{(1)}}\left( { - {p_{chg,m,k}}} \right) = 0,\,\,\,{\rm{for}}\,k = 0 \ldots \left( {K - 1} \right),\,\\
    {\mu ^*}{_{m,k}^{(2)}}\left( {{p_{m,k}} - {p_{chg,m,k}}} \right) = 0,\,\,\,{\rm{for}}\,k = 0\left( {K - 1} \right),\,\\
    \,{\mu ^*}{_k^{(3)}}\left( { - \sum\limits_{m = 1}^M {{p_{m,k}}}  + {p_{g,\min }} - {p_{L,k}}} \right) = 0,\,\,\,\\
    {\rm{for}}\,k = 0 \ldots \left( {K - 1} \right),\\
    {\mu ^*}{_{m,k - 1}^{(4)}}\left( {\sum\limits_{j = 0}^{k - 1} {{p_{m,j}}}  - \frac{{{e_{\max }}}}{\Delta }} \right) = 0,\,\,\,{\rm{for}}\,k = 1 \ldots K,\,\,\\
    {\mu ^*}{_{m,k - 1}^{(5)}}\left( { - \sum\limits_{j = 0}^{k - 1} {{p_{m,j}}} } \right) = 0,\,\,\,{\rm{for}}\,k = 1 \ldots K,.
    \end{array}\]

\end{enumerate}

From (1) we get that

\[\begin{array}{l}
{\rm{for}}\,\,\,k \in \{ 0,...,K - 1\} \,\,\,{\rm{and}}\,\,\,m \in \{ 0,...,M\} \\
{\left( {{\nabla _p}L(x,\lambda ,\mu )} \right)_{m,k}} = \sum\limits_{m = 1}^M {{p_{m,k}}}  + {p_{L,k}} + {\lambda _m}\\
 + {\mu _{m,k}}^{(2)} - {\mu _{m,k}}^{(3)} + \sum\limits_{j = k + 1}^K {{\mu _{m,j - 1}}^{(4)}}  - \sum\limits_{j = k + 1}^K {{\mu _{m,j - 1}}^{(5)}}  = 0,\,\,\\
{\left( {{\nabla _{{p_{chg}}}}L(x,\lambda ,\mu )} \right)_{m,k}} = \frac{{{c_{m}}}}{a} - {u_{m,k}}^{(1)} - {u_{m,k}}^{(2)} = 0,
\end{array}\]

which can be rewritten as

\[\begin{array}{l}
{\rm{for}}\,\,\,k \in \{ 0,...,K - 1\} \,\,\,{\rm{and}}\,\,\,m \in \{ 1,...,M\} :\\
\,\sum\limits_{m = 1}^M {{p_{m,k}}}  =  - {p_{L,k}} - {\lambda _m} - {\mu _{m,k}}^{(2)}\\
 + {\mu _{m,k}}^{(3)} + \sum\limits_{j = k}^{K - 1} {\left( {{\mu _{m,j}}^{(5)} - {\mu _{m,j}}^{(4)}} \right)} ,\,\\
\frac{{{c_{m}}}}{a} = {u_{m,k}}^{(1)} + {u_{m,k}}^{(2)}.
\end{array}\]

We now consider possible values of the Lagrange multipliers ${\mu _{m,k}}^{(1)}$ and ${\mu _{m,k}}^{(2)}$ and use the equations obtained from (1) and (2) to determine ${p^*}_{chg,m,k}$. Note that ${\mu _{m,k}}^{(1)} = 0,{\mu _{m,k}}^{(2)} = 0,$ is not feasible for any $\{ m,k\} $ since $\frac{{{c_{m}}}}{a} = {\mu _{m,k}}^{(1)} + {\mu _{m,k}}^{(2)}$ and ${c_{m}} > 0\,\forall m$. From complementary-slackness it follows that if ${\mu _{m,k}}^{(1)} > 0$ then ${p^*}_{chg,m,k} = 0$, and if ${\mu _{m,k}}^{(1)} = 0$ then ${\mu _{m,k}}^{(2)} > 0$ and therefore ${p^*}_{chg,m,k} = {p^*}_{m,k}$.

Recall from the problem's constraints that

\[\begin{array}{l}
{\rm{for }}k = 0 \ldots \left( {K - 1} \right),\;m = 1 \ldots M:\\
\,{p_{m,k}} - {p_{chg,m,k}} \le 0\;,\;\;\\
\, - {p_{chg,m,k}} \le 0,\;\;\;\;\;\;\;\;
\end{array}\]

and therefore if ${p^*}_{m,k} \ge 0$ then ${p^*}_{chg,\,m,k} \ge 0$, hence ${p^*}_{chg,\,m,k} = {p^*}_{m,k}$, and if ${p^*}_{m,k} < 0$ then ${p^*}_{chg,\,m,k} = 0$.

Overall, we get

\[{p^*}_{chg,m,k} = \left\{ {\begin{array}{*{20}{c}}
{{p^*}_{m,k},\,\,\,\,\,\,{\rm{if}}\,{p^*}_{m,k} \ge 0\,}\\
{0,\,\,\,\,\,\,\,\,\,\,\,\,\,\,\,\,\,\,\,\,\,\,\,\,\,\,else}
\end{array}} \right\}.\]
\end{proof}

\begin{lemma}
\label{lem:pch_equals_p_kabsorb}
Consider an optimal solution $x^*$. There exists a constant $c_{\min}$ such that if $M=2$ and $c_{\min} \le c_{m}$, $\forall m$ then
\begin{equation}
\begin{aligned}
p^*_{m,k} & =
\begin{cases}
0, & \forall k = 0, \ldots, k_1, \\
\lambda_{m,k}\left(p_{g\min} - p_{L,k}\right), & k \in K_{absorb}, \\
\le 0, & \forall k = k_2, \ldots, K - 1,
\end{cases}
\end{aligned}
\end{equation}
where $\sum_{m = 1}^M \lambda_{m,k}  = 1$, $\lambda_{m,k} \ge 0$, $\forall k \in K_{absorb}$.
\end{lemma}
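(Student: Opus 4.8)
The plan is to combine three structural facts already in hand---Lemma~\ref{lem:minimal_charge}, Lemma~\ref{lem:optimal_charge_over_m}, and Lemma~\ref{lem:pch_equals_p_allk}---with the capacity (state-of-charge) constraint, the minimal-generation constraint $\sum_m p_{m,k}\ge p_{g,\min}-p_{L,k}$, the defining inequality $p_{m,k}\le p_{chg,m,k}$, and the assumption $e_{m,0}=0$. I take $c_{\min}$ to be the constant furnished by Lemma~\ref{lem:optimal_charge_over_m} (any larger value works equally well), so that under the hypothesis $c_m\ge c_{\min}$ the optimal solution satisfies $\sum_{m}p^*_{chg,m,k}=\max\!\bigl(0,\,p_{g,\min}-p_{L,k}\bigr)$ at every $k$. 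The three regimes in the statement---$\{0,\dots,k_1\}$, $K_{absorb}$, and $\{k_2,\dots,K-1\}$---correspond to the indices before $K_{absorb}$ begins, the indices in $K_{absorb}$, and the indices after it ends.

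First, for $k\notin K_{absorb}$ we have $p_{g,\min}-p_{L,k}<0$, so Lemma~\ref{lem:optimal_charge_over_m} gives $\sum_m p^*_{chg,m,k}=0$; as each summand is nonnegative, $p^*_{chg,m,k}=0$ for every $m$, and then Lemma~\ref{lem:pch_equals_p_allk} forces $p^*_{m,k}\le 0$. This already delivers the sign claim on the tail $\{k_2,\dots,K-1\}$. Next, for $k\in K_{absorb}$, combining the minimal-generation constraint, $p^*_{m,k}\le p^*_{chg,m,k}$, and Lemma~\ref{lem:optimal_charge_over_m} yields the squeeze
\[
p_{g,\min}-p_{L,k}\ \le\ \sum_m p^*_{m,k}\ \le\ \sum_m p^*_{chg,m,k}\ =\ p_{g,\min}-p_{L,k},
\]
so all the inequalities collapse to equalities; since $p^*_{chg,m,k}-p^*_{m,k}\ge 0$ and these terms sum to zero, $p^*_{m,k}=p^*_{chg,m,k}\ge 0$ for each $m$, and $\sum_m p^*_{m,k}=p_{g,\min}-p_{L,k}$. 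Setting $\lambda_{m,k}:=p^*_{m,k}/(p_{g,\min}-p_{L,k})$ where the deficit is positive, and $\lambda_{m,k}:=1/M$ at the two endpoints where it vanishes, gives $p^*_{m,k}=\lambda_{m,k}(p_{g,\min}-p_{L,k})$ with $\lambda_{m,k}\ge 0$ and $\sum_m\lambda_{m,k}=1$; in particular $p^*_{m,k}=0$ at those two endpoints.

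Finally, for the head $\{0,\dots,k_1\}$: every such index lies before $K_{absorb}$, so by the first step $p^*_{m,j}\le 0$ there; but the capacity bound with $e_{m,0}=0$ gives $\Delta\sum_{j<k_{absorb,1}}p^*_{m,j}=e_{m,k_{absorb,1}}\ge 0$, a sum of nonpositive numbers that is nonnegative, hence every $p^*_{m,j}=0$ on this range---and the endpoint $k_{absorb,1}$ is already pinned to $0$ by the previous step. Assembling the three regimes proves the lemma.

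The proof has no single difficult step, but the point that needs care is the middle regime: a priori one firm could over-charge while the other simultaneously discharges, keeping the net dispatch feasible; ruling this out is exactly what the \emph{equality} in Lemma~\ref{lem:optimal_charge_over_m} (rather than the one-sided bound of Lemma~\ref{lem:minimal_charge}) provides, after which the squeeze forces every firm to charge only, and to charge exactly its share of the deficit. I also note that none of the three steps uses $M=2$; the hypothesis is stated this way only because the remainder of the analysis specializes to a duopoly, and the same argument gives the result for general $M$.
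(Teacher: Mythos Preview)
Your proof is correct and follows essentially the same three-regime structure as the paper's proof, combining Lemmas~\ref{lem:minimal_charge}, \ref{lem:optimal_charge_over_m}, and \ref{lem:pch_equals_p_allk} with the state-of-charge constraint and the assumption $e_{m,0}=0$. The one notable difference is in the middle regime: the paper deduces $p^*_{m,k}\ge 0$ from the fact that with $M=2$, two numbers each bounded above by $S$ and summing to $S$ must both be nonnegative, whereas you use the cleaner observation that the nonnegative terms $p^*_{chg,m,k}-p^*_{m,k}$ sum to zero and hence all vanish---an argument that works for arbitrary $M$, as you correctly point out.
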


\begin{proof}
Assume $M=2$ and ${c_{\min }} \le {c_{m}}\,for\,m = 1,2$.
Since 

\[0 \le {p_{chg,m,k}}^*,\,\,m = 1,2\,,k = 0...K - 1,\]

\[\sum\limits_{m = 1}^M {{p_{chg,m,k}}^*}  = {p_{g,\min }} - {p_{L,k}},\,\forall k \in {K_{absorb}}\]

(Lemma \ref{lem:optimal_charge_over_m}) and $M=2$, it follows that 
\[{p_{chg,m,k}}^* \le {p_{g,\min }} - {p_{L,k}},\,\forall k \in {K_{absorb}}.\]

In addition, based on 

\[\begin{array}{l}
\;0 < \left( {{p_{g,\min }} - {p_{L,k}}} \right) \le \sum\limits_{m = 1}^M {{p_{m,k}}}  \le \sum\limits_{m = 1}^M {{p_{chg,m,k}}} ,\,\\
\;\forall k \in {K_{absorb}}
\end{array}\] (Lemma \ref{lem:minimal_charge}) and 

\[\sum\limits_{m = 1}^M {{p_{chg,m,k}}^*}  = {p_{g,\min }} - {p_{L,k}},\,\forall k \in {K_{absorb}}\]

(Lemma \ref{lem:optimal_charge_over_m}), it follows that 

\[\sum\limits_{m = 1}^M {{p_{m,k}}^*}  = {p_{g,\min }} - {p_{L,k}},\,\forall k \in {K_{absorb}},\]

and since $M=2$ and

\[\begin{array}{l}
{p_{m,k}}^* \le {p_{chg,m,k}}^* \le {p_{g,\min }} - {p_{L,k}},\\
m = 1,2\,,\,\forall k \in {K_{absorb}}
\end{array}\]

then

\[{p_{m,k}}^* \ge 0,\,\,m = 1,2\,,k \in {K_{absorb}}.\]

Moreover, 

\[{p^*}_{chg,m,k} = \left\{ {\begin{array}{*{20}{c}}
{{p^*}_{m,k},\,\,\,\,\,\,{\rm{if}}\,{p^*}_{m,k} \ge 0\,}\\
{0,\,\,\,\,\,\,\,\,\,\,\,\,\,\,\,\,\,\,\,\,\,\,\,\,\,\,else}
\end{array}} \right\}\]

(Lemma \ref{lem:pch_equals_p_allk}) and therefore

\[{p^*}_{chg,m,k} = {p^*}_{m,k}\,\forall k \in {K_{absorb}}.\]

In addition, since ${c_{min}} \le {c_{m}}$ then it holds for the optimal solution that

\[\sum\limits_{m = 1}^M {{p_{chg,m,k}}^*}  = \max (0,{p_{g,\min }} - {p_{L,k}}),\,k = 0...K - 1\]

(Lemma \ref{lem:optimal_charge_over_m}), and therefore for $k \in {K_{absorb}}, m \in \{1,2\},$ we get 

\[{p^*}_{chg,m,k} = {\lambda _m}\left( {{p_{g\min }} - {p_{L,k}}} \right),\]

where $\sum\limits_{m = 1}^M {{\lambda _m}}  = 1$,

and since for $k \notin {K_{absorb}},m = 1,2,$ $\sum\limits_{m = 1}^M {{p^*}_{chg,m,k}}  = 0$ and ${p^*}_{chg,m,k} \ge 0$ then

\[{p^*}_{chg,m,k} = 0,\,\;{\rm{for}}\;\;k \notin {K_{absorb}}.\]

Recall that ${k_1}$ and ${k_2}$ are the minimal and maximal elements in $K_{absorb}$ accordingly. Since

\[{p^*}_{chg,m,k} = 0,\,\;{\rm{for}}\;\;k \notin {K_{absorb}}\] 

then 

\[{p^*}_{m,k} \le 0\,\forall k \notin {K_{absorb}}\]

(Lemma \ref{lem:pch_equals_p_allk}), and in particular

\[{p^*}_{m,k} \le 0,\,\forall k = {k_2} + \Delta , \ldots K.\]

Moreover, together with the constraint

\[0 \le \sum\limits_{j = 0}^{k - 1} {{p_{m,j}}} ,\;{\rm{for}}\;k = 1 \ldots K,\,m = 1 \ldots M\]

it follows that

\[{p^*}_{m,k} = 0,\,\forall k = 0 \ldots {k_1} - \Delta .\]

Overall, we get

\[\begin{array}{l}
{p^*}_{chg,m,k} = {\lambda _m}\left( {{p_{g\min }} - {p_{L,k}}} \right),\,k \in {K_{absorb}},\\
{p^*}_{chg,m,k} = 0,\,k \notin {K_{absorb}},\\
{p^*}_{m,k} = {\lambda _m}\left( {{p_{g\min }} - {p_{L,k}}} \right),k \in {K_{absorb}},\\
{p^*}_{m,k} = 0,\,\forall k = 0 \ldots {k_1} - \Delta ,\\
{p^*}_{m,k} \le 0\,\forall k = {k_2} + \Delta , \ldots K - 1.
\end{array}\]

where $\sum\limits_{m = 1}^M {{\lambda _{m,k}}}  = 1,\,{\lambda _{m,k}} \ge 0,\forall k \in {K_{absorb}}.$
\end{proof}

\begin{lemma}
\label{lem:optimal_charge_over_k}
Consider an optimal solution $x^*$. There exists a constant $c_{\min}$ such that if $M=2$ and $c_{\min} \le c_{m}$, $\forall m$ then it holds for the optimal solution that
\begin{equation}
\sum_{k = 0}^{K - 1} {{p^*}_{chg,m,k}}  = \begin{cases}
E_{m,split}, & c_{m} = c_{M/m}, \\
E_{m,\max}, & c_{m} < c_{M/m}, \\
E_{m,\min }, & c_{m} > c_{M/m},
\end{cases}
\end{equation}
where
\begin{equation}
\begin{aligned}
&E_{m,\max} = \min\left(E_{absorb},\frac{e_{\max,m}}{\Delta}\right), \\
&E_{m,\min} = \max\left(E_{absorb} - \frac{e_{\max ,M/m}}{\Delta},0\right), \\
&E_{m,split} = \\
&\min\left(\max\left(\frac{1}{2}E_{absorb},E_{absorb} - \frac{e_{\max,M/m}}{\Delta}\right),\frac{e_{\max ,m}}{\Delta}\right).
\end{aligned}
\end{equation}
\end{lemma}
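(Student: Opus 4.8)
The plan is to bootstrap from Lemma~\ref{lem:pch_equals_p_kabsorb}, which, for $M=2$ and $c_{\min}\le c_m$, already fixes the \emph{shape} of any optimal dispatch: the two storages are idle before $k_1$, charge only on $K_{absorb}$ with $p^*_{chg,m,k}=p^*_{m,k}=\lambda_{m,k}(p_{g\min}-p_{L,k})$ and $\sum_m\lambda_{m,k}=1$, and only discharge after $k_2$. Setting $E_m:=\sum_{k}p^*_{chg,m,k}=\sum_{k\in K_{absorb}}\lambda_{m,k}(p_{g\min}-p_{L,k})$, the identity $\sum_m\lambda_{m,k}=1$ immediately gives $E_1+E_2=E_{absorb}$, so the entire statement reduces to determining how the optimal solution splits the fixed total $E_{absorb}$ between the two firms. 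First I would translate the capacity limits into box constraints on $(E_1,E_2)$: because $e_{m,0}=0$, the state of charge of firm $m$ is flat before $k_1$, non-decreasing through $K_{absorb}$ (the increments $\lambda_{m,k}(p_{g\min}-p_{L,k})$ are nonnegative), and non-increasing after $k_2$, so it attains its maximum $\Delta E_m$ just past the absorb window; hence $e_{m,k}\le e_{m,\max}$ is equivalent to $E_m\le e_{\max,m}/\Delta$ (and $e_{m,k}\ge 0$ is automatic). With $E_m\ge 0$ and $E_1+E_2=E_{absorb}$ this yields $E_m\in[E_{m,\min},E_{m,\max}]$ for the stated $E_{m,\min},E_{m,\max}$, and these endpoints satisfy $E_{1,\min}=E_{absorb}-E_{2,\max}$, $E_{1,\max}=E_{absorb}-E_{2,\min}$, so the feasible splits form the segment $\{(E_1,E_{absorb}-E_1):E_1\in[E_{1,\min},E_{1,\max}]\}$.

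Next I would decompose the operator's objective \eqref{eq:ISO_opt} across the three time regions. Before $k_1$ all variables vanish; on $K_{absorb}$ one has $\sum_m p_{m,k}=p_{g\min}-p_{L,k}$, so the quadratic/linear generation terms there are a constant; after $k_2$ those terms involve only the aggregate discharge $s_k=\sum_m p_{m,k}$, whose total $\sum_{k>k_2}s_k=-E_{absorb}$ is fixed and whose feasible set is itself split-independent, since any admissible aggregate discharge profile is realized by the proportional allocation $p_{m,k}=(E_m/E_{absorb})s_k$, for which $e_{m,k}=(E_m/E_{absorb})e_{\text{agg},k}\in[0,\Delta E_m]\subseteq[0,e_{\max,m}]$. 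Consequently, up to the regularization term $\varepsilon_0\|x\|^2$, the only part of the optimal cost that depends on the split is the bid term $\tfrac1a(c_1E_1+c_2E_2)=\tfrac1a\big((c_1-c_2)E_1+c_2E_{absorb}\big)$, an affine function of $E_1$ whose slope has the sign of $c_1-c_2$. Minimizing over $E_1\in[E_{1,\min},E_{1,\max}]$ then gives the first two cases at once: if $c_m<c_{M/m}$ the negative slope drives firm $m$ to its maximum, $\sum_k p^*_{chg,m,k}=E_{m,\max}$ (and the opponent to $E_{M/m,\min}$), while if $c_m>c_{M/m}$ the reverse holds, $\sum_k p^*_{chg,m,k}=E_{m,\min}$.

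The remaining case $c_m=c_{M/m}$ is where the bid term is constant and the tie is resolved by the strictly convex regularizer $\varepsilon_0\|x\|^2$. The key observation here is that, on the feasible $E_1$-segment, each firm's running state-of-charge never exceeds its own total $E_m\le e_{\max,m}/\Delta$, so the per-firm capacity constraints are slack throughout the \emph{interior} optimization; writing $g(E_1)$ for the minimum of $\|x\|^2$ over the internal degrees of freedom (the $\lambda_{m,k}$'s and the discharge allocation) at a given split, $g$ is then convex and, by relabeling the two firms inside the sum of squares (legitimate precisely because $c_1=c_2$ makes the bid term symmetric and the capacity bounds merely swap), satisfies $g(E_1)=g(E_{absorb}-E_1)$, hence is symmetric about $\tfrac12E_{absorb}$. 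A convex function symmetric about $\tfrac12E_{absorb}$ attains its minimum over $[E_{m,\min},E_{m,\max}]$ at the projection of $\tfrac12E_{absorb}$ onto that interval, which is exactly $E_{m,split}=\min\!\big(\max(\tfrac12E_{absorb},E_{absorb}-e_{\max,M/m}/\Delta),e_{\max,m}/\Delta\big)$; the endpoint identities from the first step guarantee $E_{1,split}+E_{2,split}=E_{absorb}$, so the two firms' values are mutually consistent.

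I expect the main obstacles to be twofold. The first is making rigorous the claim that the post-$k_2$ generation cost and its feasible set are independent of the split, so that the only split-sensitive term (apart from the regularizer) is the linear bid term — this requires the proportional-allocation argument together with the monotonicity of the aggregate state of charge in the discharge region (itself a consequence of $p^*_{m,k}\le 0$ there from Lemma~\ref{lem:pch_equals_p_kabsorb}). The second is controlling $\varepsilon_0$: using the discreteness of the bids (so $|c_m-c_{M/m}|\ge\delta$ whenever they differ) and a uniform bound on $g'$ over the feasible segment (a compactness argument in the spirit of Lemma~\ref{lem:f_bounded}), one must argue that $\varepsilon_0$ is small enough that it cannot reverse the sign of the slope of the bid term in the unequal-bid cases, yet still serves as the unique tiebreaker when the bids are equal.
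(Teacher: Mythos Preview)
Your approach is correct and largely parallels the paper's: both start from Lemma~\ref{lem:pch_equals_p_kabsorb}, observe that the generation part of the objective depends only on the aggregate $\sum_m p_{m,k}$ (which is fixed on $K_{absorb}$), reduce the problem to choosing the split $(E_1,E_2)$ with $E_1+E_2=E_{absorb}$ subject to the box constraints $E_m\le e_{\max,m}/\Delta$, and then minimize the linear bid term $h(P_{chg})=\sum_m c_m E_m$ over that segment to obtain $E_{m,\max}$ and $E_{m,\min}$ in the unequal-bid cases.

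Where you genuinely diverge from the paper is in two places. First, the paper simply asserts that $f(P)$ ``does not depend on $\lambda$'' without addressing the discharge region $k>k_2$; you make this rigorous via the proportional-allocation argument, which is a real improvement. Second, and more substantively, for $c_1=c_2$ the paper does not derive the fair split at all---it writes ``We may assume a `fair' solution, i.e., $\lambda_{m,k}$ are as close to $\tfrac12$ as the constraints allow'' and proceeds by fiat. Your route through the $\varepsilon_0$-regularizer and the symmetry/convexity of $g$ actually \emph{proves} the tie-break, which the paper's argument does not. One small point to tighten: your symmetry claim $g(E_1)=g(E_{absorb}-E_1)$ is stated on the feasible segment $[E_{1,\min},E_{1,\max}]$, but when $e_{\max,1}\neq e_{\max,2}$ the reflected point may lie outside that segment. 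The clean fix is exactly what your slackness observation already enables: since the per-firm capacity bounds are inactive on the feasible segment, $g$ extends to a convex, label-symmetric function $\tilde g$ on all of $[0,E_{absorb}]$, and the minimizer of $\tilde g$ over $[E_{1,\min},E_{1,\max}]$ is then the projection of $E_{absorb}/2$ onto that interval, i.e., $E_{m,split}$.
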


\begin{proof}
Recall the original optimization problem 

\[\begin{array}{l}
\min .\;\;\;\;\;f\left( P \right)\, + h({P_{chg}})\\
s.t.\\
 - \sum\limits_{m = 1}^M {{p_{m,k}}}  \le {p_{L,k}} - {p_{g,\min }},\;{\rm{for}}\;k = 0 \ldots \left( {K - 1} \right),\;\\
{\rm{for}}\;\;\;m = 1 \ldots M:\;\\
0 \le {p_{chg,m,k}}\;,\;{\rm{for}}\;\;k = 0 \ldots \left( {K - 1} \right)\;\;\\
\,{p_{m,k}} - {p_{chg,m,k}} \le 0\;,\;{\rm{for}}\;\;k = 0 \ldots \left( {K - 1} \right)\;\\
\;\,\sum\limits_{j = 0}^{k - 1} {{p_{m,j}}}  - \frac{{{e_{m,\max }}}}{\Delta } \le 0,\;\;{\rm{for}}\;k = 1 \ldots K\;\\
 - \left( {\sum\limits_{j = 0}^{k - 1} {{p_{m,j}}} } \right) \le \;0,\;\;f{\rm{or}}\;k = 1 \ldots K\;\\
\,\sum\limits_{j = 0}^{K - 1} {{p_{m,j}}}  = 0\;\\
{\rm{where}}\\
{\left( P \right)_{MK \times 1}} = {\left( \begin{array}{l}
{p_{1,0}}, \ldots ,{p_{M,0}},{p_{1,1}}, \ldots ,{p_{M,1}},\\
 \ldots ,{p_{1,\left( {K - 1} \right)}}, \ldots ,{p_{M,\left( {K - 1} \right)}}
\end{array} \right)^{\rm{T}}}\\
{\left( {{P_{chg}}} \right)_{MK \times 1}} = {\left( {\begin{array}{*{20}{c}}
{{p_{chg,}}_{1,0}, \ldots ,{p_{chg,M,0}},}\\
{{p_{chg,1,1}}, \ldots ,{p_{chg,M,1}}, \ldots ,}\\
{{p_{chg,1,\left( {K - 1} \right)}}, \ldots ,{p_{chg,M,\left( {K - 1} \right)}}}
\end{array}} \right)^{\rm{T}}}\\
f(P) = \frac{1}{2}\sum\limits_{k = 0}^{K - 1} {{{\left( {\sum\limits_{m = 1}^M {{p_{m,k}}} } \right)}^2}}  + \sum\limits_{k = 0}^{K - 1} {\left( {{p_{L,k}}\sum\limits_{m = 1}^M {{p_{m,k}}} } \right)} \\
h({P_{chg}}) = \sum\limits_{k = 0}^{K - 1} {\sum\limits_{m = 1}^M {{c_{m}}{p_{chg,m,k}}} } 
\end{array}\]

Consider an optimal solution ${x^*}$. Based on Lemma \ref{lem:pch_equals_p_kabsorb}, there exists a constant $c_{min}$ such that if $M = 2$ and ${c_{\min }} \le {c_{m}}\,for\,m = 1,2$ then

\[\begin{array}{l}
{p^*}_{chg,m,k} = {p^*}_{m,k} = {\lambda _{m,k}}\left( {{p_{g\min }} - {p_{L,k}}} \right),\,k \in {K_{absorb}}\\
{p^*}_{chg,m,k} = 0,\,k \notin {K_{absorb}}.
\end{array}\]

where $\sum\limits_{m = 1}^M {{\lambda _{m,k}}}  = 1\,\,{\rm{and}}\,\,0 \le {\lambda _{m,k}},\,\forall k \in {K_{absorb}}.$

We denote \[\lambda  = \left\{ {{\lambda _{m,k}}|\sum\limits_{m = 1}^M {{\lambda _{m,k}}}  = 1,\,0 \le {\lambda _{m,k}},\forall k \in {K_{absorb}}} \right\}.\]

Since the expressions $\sum\limits_{m = 1}^M {{p^*}_{m,k}} ,\,k \in {K_{absorb}}$ are constant then $f(P)$ does not depend on $\lambda $. In addition, the cost function is $f(P) + h({P_{chg}})$, i.e., an additively separable function, hence the optimal solution is any feasible $\lambda $ that minimizes $h({P_{chg}})$. We now turn our attention to the sum $\sum\limits_{k = 0}^{K - 1} {{p_{chg,m,k}}} $ and note that based on Lemma \ref{lem:pch_equals_p_kabsorb},

\[\begin{array}{l}
{p^*}_{chg,m,k} = {\lambda _{m,k}}\left( {{p_{g\min }} - {p_{L,k}}} \right),\,k \in {K_{absorb}},\\
{p^*}_{chg,m,k} = 0,\,k \notin {K_{absorb}},\\
{p^*}_{m,k} = {\lambda _{m,k}}\left( {{p_{g\min }} - {p_{L,k}}} \right),k \in {K_{absorb}},\\
{p^*}_{m,k} = 0,\,\forall k = 0 \ldots {k_1} - \Delta .
\end{array}\]

and therefore for $k \in {K_{absorb}},\,m = 1,2$ the constraints

\[\begin{array}{l}
 - \sum\limits_{m = 1}^M {{p_{m,k}}}  \le {p_{L,k}} - {p_{g,\min }},\\
0 \le {p_{chg,m,k}},\\
\,{p_{m,k}} - {p_{chg,m,k}} \le 0,\\
0 \le \sum\limits_{k = 0}^{k - 1} {{p_{m,k}}} .\;
\end{array}\]

hold for any $\lambda$.

Hence, the only constraints on $\lambda $ are 

\[\sum\limits_{k = 0}^{k - 1} {{p_{m,k}}}  \le \frac{{{e_{m,\max }}}}{\Delta },\,k = 0 \ldots \left( {K - 1} \right).\]

Based on 

\[\begin{array}{l}
{p^*}_{m,k} = 0,\, k = 0 \ldots {k_1} - \Delta ,\\
{p^*}_{m,k} = {\lambda _{m,k}}\left( {{p_{g\min }} - {p_{L,k}}} \right),k \in {K_{absorb}},\\
{p^*}_{m,k} \le 0\, k = {k_2} + \Delta , \ldots K - 1,
\end{array}\]

The constraints above can be written as

\[\sum\limits_{j = {k_1}}^{k - 1} {{\lambda _{m,j}}\left( {{p_{g\min }} - {p_{L,j}}} \right)}  \le \frac{{{e_{m,\max }}}}{\Delta },k \in {K_{absorb}},\]

And since ${\lambda _{m,k}} \ge 0$ then the constraints above can be simplified to

$\sum\limits_{k \in {K_{absorb}}} {{\lambda _{m,k}}\left( {{p_{g\min }} - {p_{L,k}}} \right)}  \le \frac{{{e_{m,\max }}}}{\Delta },\,\forall m.$

Next, we will find the values of $\sum\limits_{k = 0}^{K - 1} {{p_{chg,m,k}}} $ such that ${\lambda ^*} = \arg \min (h)$ and

$\sum\limits_{k \in {K_{absorb}}} {{\lambda _{m,k}}\left( {{p_{g\min }} - {p_{L,k}}} \right)}  \le \frac{{{e_{m,\max }}}}{\Delta },m \in \{1,2\}$

for different relations between ${c_{m}}\,{\rm{and}}\,{c_{M/m}}$. Note that ${\lambda ^*}_{m,k},\,k \in {K_{absorb}},m = 1,2$ strictly define ${\lambda ^*}$. Also note that for any optimal solution the function $h$ is

\[\begin{array}{l}
h({P^*}_{chg}) = \sum\limits_{k = 0}^{K - 1} {\sum\limits_{m = 1}^M {{c_{m}}{p^*}_{chg,m,k}} }  = \\
\sum\limits_{k \in {K_{absorb}}} {({p_{g\min }} - {p_{L,k}})({\lambda _{m,k}}{c_{m}} + {\lambda _{M/m,k}}{c_{M/m}})}
\end{array}\]

(Lemma~\ref{lem:pch_equals_p_kabsorb}). Hence if ${c_{m}} < {c_{M/m}}$ where $m \in \{ 1,2\} $ then ${\lambda _{m,k}},k \in {K_{absorb}}$ should be maximized to minimize $h$, and if ${c_{m}} = {c_{M/m}}$ then $h$ does not depend on the choice of ${\lambda _{m,k}},k \in {K_{absorb}}$. 

We now consider the case that ${c_{m}} < {c_{M/m}}$, in which we want to maximize ${\lambda _{m,k}},k \in {K_{absorb}}$.

If ${E_{absorb}} < \frac{{{e_{m,\max }}}}{\Delta }$ then ${\lambda _{m,k}}^* = 1,\,k \in {K_{absorb}},m = 1,2$ and it holds for the optimal solution that

\[\begin{array}{l}
\sum\limits_{k \in {K_{absorb}}} {{\lambda ^*}_{m,k}({p_{g\min }} - {p_{L,k}})}  = \sum\limits_{k = 0}^{K - 1} {{p^*}_{chg,m,k}}  = {E_{absorb}},\\
\sum\limits_{k \in {K_{absorb}}} {{\lambda ^*}_{M/m,k}({p_{g\min }} - {p_{L,k}})}  = \sum\limits_{k = 0}^{K - 1} {{p^*}_{chg,M/m,k}}  = 0.
\end{array}\]

If $\frac{{{e_{m,\max }}}}{\Delta } < {E_{absorb}}$ then it holds for the optimal solution that

\[\begin{array}{l}
\sum\limits_{k \in {K_{absorb}}} {{\lambda _{m,k}}({p_{g\min }} - {p_{L,k}})}  = \\
\sum\limits_{k = 0}^{K - 1} {{p^*}_{chg,m,k}}  = \frac{{{e_{m,\max }}}}{\Delta },\\
\sum\limits_{k \in {K_{absorb}}} {{\lambda ^*}_{M/m,k}({p_{g\min }} - {p_{L,k}})}  = \\
\sum\limits_{k = 0}^{K - 1} {{p^*}_{chg,M/m,k}}  = {E_{absorb}} - \frac{{{e_{m,\max }}}}{\Delta },
\end{array}\]

and together we get

\[\begin{array}{l}
\sum\limits_{k = 0}^{K - 1} {{p^*}_{chg,m,k}}  = \min ({E_{absorb}},\frac{{{e_{m,\max }}}}{\Delta }),\\
\sum\limits_{k = 0}^{K - 1} {{p^*}_{chg,M/m,k}}  = \max ({E_{absorb}} - \frac{{{e_{m,\max }}}}{\Delta },0).
\end{array}\]

We now consider the case that ${c_{m}} = {c_{M/m}}$ in which we can choose any feasible ${\lambda _{m,k}},k \in {K_{absorb}}$. We may assume a 'fair' solution, i.e., ${\lambda _{m,k}},k \in {K_{absorb}}$ are as close to $\frac{1}{2}$ as the constraints allow.  

If $\frac{1}{2}{E_{absorb}} \le \frac{{{e_{\max ,m}}}}{\Delta }$ and $\frac{1}{2}{E_{absorb}} \le \frac{{{e_{\max ,M/m}}}}{\Delta }$ then ${\lambda _{m,k}}^* = 1,\,k \in {K_{absorb}},m = 1,2$ and it holds for the optimal solution that

\[\begin{array}{l}
\sum\limits_{k \in {K_{absorb}}} {{\lambda ^*}_{m,k}({p_{g\min }} - {p_{L,k}})}  = \\
\sum\limits_{k = 0}^{K - 1} {{p^*}_{chg,m,k}}  = \frac{1}{2}{E_{absorb}},\\
\sum\limits_{k \in {K_{absorb}}} {{\lambda ^*}_{M/m,k}({p_{g\min }} - {p_{L,k}})}  = \\
\sum\limits_{k = 0}^{K - 1} {{p^*}_{chg,M/m,k}}  = \frac{1}{2}{E_{absorb}},
\end{array}\]

If $\frac{1}{2}{E_{absorb}} > \frac{{{e_{\max ,m}}}}{\Delta }$ and $\frac{1}{2}{E_{absorb}} \le \frac{{{e_{\max ,M/m}}}}{\Delta }$ then the closest solution to a fair allocation that holds the constraints is

\[\begin{array}{l}
\sum\limits_{k \in {K_{absorb}}} {{\lambda _{m,k}}({p_{g\min }} - {p_{L,k}})}  = \\
\sum\limits_{k = 0}^{K - 1} {{p^*}_{chg,m,k}}  = \frac{{{e_{m,\max }}}}{\Delta },\\
\sum\limits_{k \in {K_{absorb}}} {{\lambda ^*}_{M/m,k}({p_{g\min }} - {p_{L,k}})}  = \\
\sum\limits_{k = 0}^{K - 1} {{p^*}_{chg,M/m,k}}  = {E_{absorb}} - \frac{{{e_{m,\max }}}}{\Delta },
\end{array}\]

If $\frac{1}{2}{E_{absorb}} \le \frac{{{e_{\max ,m}}}}{\Delta }$ and $\frac{1}{2}{E_{absorb}} > \frac{{{e_{\max ,M/m}}}}{\Delta }$ then the closest solution to a a fair allocation that holds the constraints is

\[\begin{array}{l}
\sum\limits_{k \in {K_{absorb}}} {{\lambda _{m,k}}({p_{g\min }} - {p_{L,k}})}  = \\
\sum\limits_{k = 0}^{K - 1} {{p^*}_{chg,m,k}}  = {E_{absorb}} - \frac{{{e_{M/m,\max }}}}{\Delta },\\
\sum\limits_{k \in {K_{absorb}}} {{\lambda ^*}_{M/m,k}({p_{g\min }} - {p_{L,k}})}  = \\
\sum\limits_{k = 0}^{K - 1} {{p^*}_{chg,M/m,k}}  = \frac{{{e_{M/m,\max }}}}{\Delta },
\end{array}\]

Together we get

\[\begin{array}{l}
\sum\limits_{k = 0}^{K - 1} {{p^*}_{chg,m,k}}  = \\
\min \left( {\max \left( {\frac{1}{2}{E_{absorb}},{E_{absorb}} - \frac{{{e_{\max ,M/m}}}}{\Delta }} \right),\frac{{{e_{\max ,m}}}}{\Delta }} \right),\\
\sum\limits_{k = 0}^{K - 1} {{p^*}_{chg,M/m,k}}  = \\
\min \left( {\max \left( {\frac{1}{2}{E_{absorb}},{E_{absorb}} - \frac{{{e_{\max ,m}}}}{\Delta }} \right),\frac{{{e_{\max ,M/m}}}}{\Delta }} \right),
\end{array}\]

and overall we get for $m = \{ 1,2\} $ that it holds for the optimal solution that

\[\begin{array}{l}
\sum\limits_{k = 0}^{K - 1} {{p^*}_{chg,m,k}}  = \left\{ {\begin{array}{*{20}{c}}
{{E_{m,split}},\,\,\,{c_{m}} = {c_{M/m}}}\\
{{E_{m,\max }},\,\,\,{c_{m}} < {c_{M/m}}}\\
{{E_{m,\min }},\,\,\,\,{c_{m}} > {c_{M/m}}}
\end{array}} \right\}\\
{\rm{where}}\,\\
{E_{m,\max }} = \min \left( {{E_{absorb}},\frac{{{e_{\max ,m}}}}{\Delta }} \right),\\
{E_{m,\min }} = \max \left( {{E_{absorb}} - \frac{{{e_{\max ,M/m}}}}{\Delta },0} \right),\\
{E_{m,split}} = \\
\min \left( {\max \left( {\frac{1}{2}{E_{absorb}},{E_{absorb}} - \frac{{{e_{\max ,M/m}}}}{\Delta }} \right),\frac{{{e_{\max ,m}}}}{\Delta }} \right).
\end{array}\]
\end{proof}

\begin{lemma}
\label{lem:optimal_c_diff}
There exists a constant $c_{\min}$ such that if $c_{\min} \le c_{m}$, $m = \{1,2\}$ and $c_{\min} < \frac{E_{m,\min}}{E_{m,\max}} \cdot c_{\max}$, $m = \{1,2\}$ then storage $m\in \{1,2\}$'s optimal decision is
\begin{equation}
c^*_{m} = \begin{cases}
c_{M/m} - \delta, & c_{\max}E_{m,\min} \le \left(c_{M/m} - \delta\right)E_{m,\max}, \\
c_{\max}, & c_{\max}E_{m,\min}> \left(c_{M/m} - \delta\right)E_{m,\max},
\end{cases}
\end{equation}
where
\begin{equation}
\begin{aligned}
E_{m,\max} &= \min\left(E_{absorb},\frac{e_{\max,m}}{\Delta} \right),\\
E_{m,\min} &= \max\left(E_{absorb} - \frac{e_{\max,M/m}}{\Delta},0\right).
\end{aligned}
\end{equation}
\end{lemma}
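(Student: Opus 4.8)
The plan is to fix the competitor's bid $c_{M/m}$ and study storage firm $m$'s profit $J_m(c_m)=c_m\sum_{k=0}^{K-1}p^*_{chg,m,k}$ purely as a function of its own bid $c_m$. First I would apply Lemma~\ref{lem:optimal_charge_over_k}: for bids with $c_{\min}\le c_m$, the total charged energy of firm $m$ equals $E_{m,\max}$ when $c_m<c_{M/m}$, equals $E_{m,split}$ when $c_m=c_{M/m}$, and equals $E_{m,\min}$ when $c_m>c_{M/m}$. Hence $J_m$ is a three-piece function that is linear and nondecreasing on each piece. I would also unpack what the hypotheses buy: since $c_{\min}>0$, the assumption $c_{\min}<\tfrac{E_{m,\min}}{E_{m,\max}}c_{\max}$ (for both firms) forces $E_{m,\min}>0$, hence $e_{\max,M/m}/\Delta<E_{absorb}$; combined with feasibility of the ISO problem, which with $e_{m,0}=0$ requires $e_{\max,1}/\Delta+e_{\max,2}/\Delta\ge E_{absorb}$, this pins down $E_{m,\max}=e_{\max,m}/\Delta$ and the ordering $E_{m,\min}\le E_{m,split}\le E_{m,\max}$, with $E_{m,split}<E_{m,\max}$ in the non-degenerate balanced-capacity regime.

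Because each piece of $J_m$ is nondecreasing in $c_m$, on the discrete grid $\{\delta,2\delta,\ldots,c_{\max}\}$ the maximizer is one of at most three candidates: the ``undercut'' bid $c_{M/m}-\delta$, with profit $(c_{M/m}-\delta)E_{m,\max}$; the ``match'' bid $c_{M/m}$, with profit $c_{M/m}E_{m,split}$; and the ``overbid'' bid $c_{\max}$, with profit $c_{\max}E_{m,\min}$. The core of the argument is to discard the ``match'' candidate: from $E_{m,split}\le E_{m,\max}$ and $\delta$ sufficiently small (covered by $\delta\ll c_{\max}$ together with $c_{M/m}\ge c_{\min}$) one gets $(c_{M/m}-\delta)E_{m,\max}\ge c_{M/m}E_{m,split}$, so undercutting weakly dominates matching; since $c_{M/m}-\delta<c_{\max}$, the tie-break rule ``choose the lower bid'' then guarantees $c_{M/m}$ is never selected. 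What remains is the binary comparison of $(c_{M/m}-\delta)E_{m,\max}$ with $c_{\max}E_{m,\min}$: the firm plays $c_{M/m}-\delta$ precisely when $c_{\max}E_{m,\min}\le(c_{M/m}-\delta)E_{m,\max}$, the equality case again landing on $c_{M/m}-\delta$ by the tie-break rule (which is why the first branch uses ``$\le$''), and it plays $c_{\max}$ otherwise. This is exactly the claimed dichotomy.

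The step I expect to be most delicate is the bookkeeping at the edges of the bid grid and confirming that ``match'' cannot be strictly optimal. One must check that $c_{M/m}-\delta$ is admissible and stays within the range $[c_{\min},c_{\max}]$ on which Lemma~\ref{lem:optimal_charge_over_k} was invoked; here the hypothesis $c_{\min}<\tfrac{E_{m,\min}}{E_{m,\max}}c_{\max}$ does double duty, since whenever $c_{M/m}-\delta$ would sit at or below $c_{\min}$ we get $(c_{M/m}-\delta)E_{m,\max}\le c_{\min}E_{m,\max}<c_{\max}E_{m,\min}$, so the ``overbid'' branch fires and the problematic sub-case is never reached, and symmetrically this hypothesis is what makes the ``overbid'' branch genuinely attainable for $c_{M/m}$ near $c_{\min}$ (the feature that ultimately drives the non-convergence in Theorem~\ref{theo:infinite_loop}). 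The remaining subtlety is that the inequality $(c_{M/m}-\delta)E_{m,\max}\ge c_{M/m}E_{m,split}$ eliminating the ``match'' option needs $E_{m,split}<E_{m,\max}$; the knife-edge $E_{m,split}=E_{m,\max}$ (one firm with capacity below $\tfrac12 E_{absorb}$, or total capacity exactly $E_{absorb}$) would have to be excluded by the standing assumptions or handled as a separate degenerate case.
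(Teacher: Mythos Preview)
Your proposal is correct and follows essentially the same route as the paper: invoke Lemma~\ref{lem:optimal_charge_over_k} to reduce $J_m$ to a three-piece linear function, eliminate the ``match'' option via $E_{m,\max}>E_{m,split}$ together with $\delta\ll c_{\min}$, and then compare the ``undercut'' and ``overbid'' profits, with the tie-break rule resolving equality. Your identification of the knife-edge $E_{m,split}=E_{m,\max}$ as a delicate point is apt---the paper simply asserts the strict inequality $E_{m,\max}>E_{m,split}$ without justification, so your caveat is, if anything, more careful than the original.
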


\begin{proof}
By the Lemma's assumption, ${c_{\min }} \le {c_{m}} $ hence 

\[\begin{array}{l}
\sum\limits_{k = 0}^{K - 1} {{p^*}_{chg,m,k}}  = \left\{ {\begin{array}{*{20}{c}}
{{E_{m,split}},\,\,\,{c_{m}} = {c_{M/m}}}\\
{{E_{m,\max }},\,\,\,{c_{m}} < {c_{M/m}}}\\
{{E_{m,\min }},\,\,\,\,{c_{m}} > {c_{M/m}}}
\end{array}} \right\}\\
{\rm{where}}\,\\
{E_{m,\max }} = \min \left( {{E_{absorb}},\frac{{{e_{\max ,m}}}}{\Delta }} \right),\\
{E_{m,\min }} = \max \left( {{E_{absorb}} - \frac{{{e_{\max ,M/m}}}}{\Delta },0} \right),\\
{E_{m,split}} = \\
\min \left( {\max \left( {\frac{1}{2}{E_{absorb}},{E_{absorb}} - \frac{{{e_{\max ,M/m}}}}{\Delta }} \right),\frac{{{e_{\max ,m}}}}{\Delta }} \right),
\end{array}\]

(Lemma~\ref{lem:optimal_charge_over_k}), and it follows that the cost function of each storage system is 
\[\begin{array}{l}
{J_m}({c_{m}}) = {c_{m}}\sum\limits_{k = 0}^{K - 1} {{p_{chg,m,k}}}  = \\
\left\{ {\begin{array}{*{20}{c}}
{{c_{m}} \cdot {E_{m,split}},\,\,\,\,\,{c_{m}} = {c_{M/m}}}\\
{{c_{m}} \cdot {E_{m,\max }},\,\,\,\,\,\,{c_{m}} < {c_{M/m}}}\\
{{c_{m}} \cdot {E_{m,min}},\,\,\,\,\,\,\,{c_{m}} > {c_{M/m}}}
\end{array}} \right\}\,.
\end{array}\]


From here until the end of this proof, we use the notation $E_{\min}$ for $E_{m,\min}$ and $E_{\max}$ for $E_{m,\max}$. 

Next, we show that

\[\begin{array}{l}
{J_m}({c^*}_{m}) > {J_m}({c_{m}}),\,\,\\
{c_{m}} \in \left\{ {{c_{\min }}, \ldots ,{c_{\max }}} \right\} / {c^*}_{m}
\end{array}\]

where

\[\begin{array}{l}
{c^*}_{m} = \\
\left\{ {\begin{array}{*{20}{c}}
{{c_{M/m}} - \delta ,\,\,\left( {{c_{M/m}} - \delta } \right) \ge \frac{{{E_{\min }}}}{{{E_{\max }}}} \cdot {c_{\max }}\,\,\,\,\,\,\,\,}\\
{{c_{\max ,}}\,\,\,\,\,\,\,\,\,\,\,\,\,\,\,\,\,\,\,\,\,\,\,\,\,\,\,\,\left( {{c_{M/m}} - \delta } \right) < \frac{{{E_{\min }}}}{{{E_{\max }}}} \cdot {c_{\max }}}
\end{array}} \right\}
\end{array}\].

If $\,\left( {{c_{M/m}} - \delta } \right) < \frac{{{E_{\min }}}}{{{E_{\max }}}} \cdot {c_{\max }}$ then for ${c_{m}} = {c_{\min }}, \ldots ,\left( {{c_{M/m}} - \delta } \right),$

\[{J_m}({c^*}_{m}) = {c_{\max }} \cdot {E_{\min }} > {c_{m}} \cdot {E_{\max }} = {J_m}({c_{m}}).\]

For ${c_{m}} = \left( {{c_{M/m}} + \delta } \right), \ldots ,\left( {{c_{\max }} - \delta } \right),$ 

\[{J_m}({c^*}_{m}) = {c_{\max }} \cdot {E_{\min }} > {c_{m}} \cdot {E_{\min }} = {J_m}({c_{m}}).\]

For ${c_{m}} = {c_{M/m}}$, since \[{c_{m}} \ge {c_{\min }} \gg \delta \] and ${E_{\max }} > {E_{split}}$ then 

\[\begin{array}{l}
{J_m}({c^*}_{m}) = {c_{\max }} \cdot {E_{\min }} > ({c_{M/m}} - \delta ) \cdot {E_{\max }} \cong \\
{c_{M/m}} \cdot {E_{\max }} > {c_{M/m}} \cdot {E_{split}} = {J_m}({c_{m}}).
\end{array}\]

If $\,\left( {{c_{M/m}} - \delta } \right) \ge \frac{{{E_{\min }}}}{{{E_{\max }}}} \cdot {c_{\max }}$ then for ${c_{m}} = \left( {{c_{M/m}} + \delta } \right), \ldots ,{c_{\max } -  \delta},$ 

\[\begin{array}{l}
{J_m}({c^*}_{m}) = \left( {{c_{M/m}} - \delta } \right) \cdot {E_{\max }} > \\
{c_{m}} \cdot {E_{\min }} = {J_m}({c_{m}}).
\end{array}\]

For ${c_{m}} = {c_{\min }}, \ldots ,\left( {{c_{M/m}} - 2\delta } \right),$

\[\begin{array}{l}
{J_m}({c^*}_{m}) = \left( {{c_{M/m}} - \delta } \right) \cdot {E_{\max }} > \\
{c_{m}} \cdot {E_{\max }} = {J_m}({c_{m}}).
\end{array}\]

For ${c_{m}} = {c_{M/m}}$, since \[{c_{m}} \ge {c_{\min }} \gg \delta \] and ${E_{\max }} > {E_{split}}$ then 

\[\begin{array}{l}
{J_m}({c^*}_{m}) = \left( {{c_{M/m}} - \delta } \right) \cdot {E_{\max }} \cong \\
{c_{M/m}} \cdot {E_{\max }} > {c_{M/m}} \cdot {E_{split}} = {J_m}({c_{m}}).
\end{array}\]

Finally, if $\,\left( {{c_{M/m}} - \delta } \right) \ge \frac{{{E_{\min }}}}{{{E_{\max }}}} \cdot {c_{\max }}$ and ${c_{m}} = {c_{\max }}$ then 

\[{J_m}({c^*}_{m}) = \left( {{c_{M/m}} - \delta } \right) \cdot {E_{\max }} = {c_{\max }} \cdot {E_{\min }} = {J_m}({c_{m}}),\]

and based on our assumption that for an identical profit players choose the lower price bid,

\[{c^*}_{m} = {c_{M/m}} - \delta \].
\end{proof}

We now proceed to the main result that provides analytical conditions for the divergence of the best-response equilibrium. 
\begin{theorem}
\label{theo:infinite_loop}
There exists a constant $c_{\min}$ such that if
\begin{equation}
c_{\min} < \frac{E_{m,\min}}{E_{m,\max}} \cdot c_{\max},\quad m = \{1,2\},
\end{equation}
and $c_{\min} \le c_1^{(0)} - \delta$ then the best-response algorithm enters an infinite loop.
\end{theorem}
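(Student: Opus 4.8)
The plan is to run Lemma~\ref{lem:optimal_c_diff} along the best-response trajectory and to establish two facts: the trajectory never leaves the region in which that lemma applies, and that region contains no Nash equilibrium. Since the best-response algorithm halts exactly when it reaches a Nash equilibrium, these two facts together force it never to halt; and since the joint strategy space is finite, the orbit is eventually periodic --- an infinite loop. Take $c_{\min}$ to be the constant furnished by Lemma~\ref{lem:optimal_c_diff}, write $\alpha_m := E_{m,\min}/E_{m,\max}$, and restate the two cases of that lemma in threshold form: whenever both current bids are at least $c_{\min}$ (so the lemma's hypotheses hold), firm $m$'s best response to the opponent's bid $c_{M/m}$ is ``undercut,'' equal to $c_{M/m} - \delta$, if $c_{M/m} \ge \alpha_m c_{\max} + \delta$, and ``jump to the cap,'' equal to $c_{\max}$, otherwise (this is the algebraic rearrangement of $c_{\max}E_{m,\min} \lessgtr (c_{M/m}-\delta)E_{m,\max}$). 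Two elementary consequences of the hypotheses are used repeatedly: $c_{\min} < \alpha_m c_{\max}$ makes the threshold $\alpha_m c_{\max} + \delta$ strictly exceed $c_{\min} + \delta$; and since the instance is feasible (the combined storage can absorb the net surplus, so $\alpha_m < 1$) while $\delta \ll c_{\max}$, we have $(1 - \alpha_m)c_{\max} > 2\delta$ for $m = 1, 2$.

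The first step is an invariant: every bid ever played by either firm lies in $[c_{\min}, c_{\max}]$, proved by induction on the iteration count. The base case is where the hypothesis $c_{\min} \le c_1^{(0)} - \delta$ enters: $c_1^{(0)} \in [c_{\min}, c_{\max}]$, so the opponent's first best response is, by the threshold rule, either $c_{\max}$ or $c_1^{(0)} - \delta \ge c_{\min}$, and the profile remains in $[c_{\min}, c_{\max}]^2$. For the inductive step, if both bids lie in $[c_{\min}, c_{\max}]$ then Lemma~\ref{lem:optimal_c_diff} governs the next update; a ``jump'' yields $c_{\max}$, while an ``undercut'' is triggered only by an opponent bid exceeding $\alpha_m c_{\max} + \delta > c_{\min} + \delta$, so the undercut value exceeds $c_{\min}$ --- and neither option exceeds $c_{\max}$. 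I expect this to be the main obstacle: it is exactly the check that the dynamics never escapes the domain of validity of Lemma~\ref{lem:optimal_c_diff}, and it is what the ``$-\delta$'' in the hypothesis secures.

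The second step is that no profile $(c_1, c_2) \in [c_{\min}, c_{\max}]^2$ is a Nash equilibrium, shown by eliminating the four jump/undercut combinations for a profile at which both firms best-respond. ``Both undercut'' gives $c_1 = c_2 - \delta$ and $c_2 = c_1 - \delta$, hence $0 = -2\delta$. ``Both jump'' gives $c_1 = c_2 = c_{\max}$, but a jump requires the opponent's bid to lie below the threshold, whereas $c_{\max} \ge \alpha_m c_{\max} + \delta$ (as $(1 - \alpha_m)c_{\max} \ge \delta$), so each firm would in fact undercut. In the mixed case where firm $1$ undercuts and firm $2$ jumps, $c_2 = c_{\max}$ and $c_1 = c_2 - \delta = c_{\max} - \delta$; but firm $2$ then faces the bid $c_{\max} - \delta$, and $c_{\max} - \delta \ge \alpha_2 c_{\max} + \delta$ precisely because $(1 - \alpha_2)c_{\max} > 2\delta$, so firm $2$'s best response is to undercut to $c_{\max} - 2\delta$, contradicting $c_2 = c_{\max}$; the symmetric case is identical. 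Hence $[c_{\min}, c_{\max}]^2$ contains no Nash equilibrium.

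Combining the two steps, the trajectory stays inside $[c_{\min}, c_{\max}]^2$, which has no Nash equilibrium, so the best-response algorithm never meets its stopping criterion; since the two bids range over the finite $\delta$-grid and the update rule is deterministic, the trajectory must revisit a state and cycle thereafter, i.e., it enters an infinite loop. If an explicit description of the loop is wanted, I would append a short trace of the orbit: from any state the updated bid strictly decreases by $\delta$ at each step until it falls below $\min_m(\alpha_m c_{\max} + \delta)$, at which point the next firm jumps back to $c_{\max}$ and the descent restarts --- this exhibits the periodic cycle directly and re-derives that no fixed point is ever reached.
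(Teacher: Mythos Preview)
Your proof is correct and takes a genuinely different route from the paper's. The paper argues constructively: it fixes an alternating update order, proves by induction on $n$ that from any starting point $c_1^{(t_0)}$ above the thresholds the bids decrease by $\delta$ at each step, tabulates the first few iterations to cover all initial cases, and then shows that once a bid drops below $\max\{L_1,L_2+\delta\}$ the next move is a jump to $c_{\max}$, after which the descent restarts and the orbit revisits an earlier state. Your approach is more structural: you extract a forward-invariant region $[c_{\min},c_{\max}]^2$ from Lemma~\ref{lem:optimal_c_diff} and the hypothesis $c_{\min}<\alpha_m c_{\max}$, then show by a four-case elimination that this region contains no Nash equilibrium, and finally invoke finiteness and determinism to get eventual periodicity. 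What you gain is brevity and robustness (the argument is insensitive to whether updates are alternating or simultaneous, and it cleanly separates ``we never leave the Lemma~\ref{lem:optimal_c_diff} regime'' from ``that regime has no fixed point''); what the paper's explicit trace gains is a concrete description of the cycle --- the monotone $\delta$-staircase followed by a jump to $c_{\max}$ --- which is exactly what underlies Fig.~\ref{fig:loop_illustration} and would let one read off the cycle length. Your closing remark already sketches that trace. Both arguments rely on the same side condition $(1-\alpha_m)c_{\max}>2\delta$; you justify it via $\delta\ll c_{\max}$ and feasibility ($\alpha_m<1$), while the paper simply records it as the assumption $c_{\max}-2\delta\ge\max\{L_1,L_2\}$.
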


\begin{proof} 
Recall our assumption that the players follow best-response dynamics and that the action of player $m$ in iteration $t$ of the best-response dynamics is denoted as $c_m^{(t)}$.

Recall that based on Lemma~\ref{lem:optimal_c_diff} if $c_{\min} \le c_m^{(t)}$ and $c_{\min} < L_m$, $m \in \{ 1,2\}$ then storage $m \in \{ 1,2\}$'s best response at iteration $t$ is 
\begin{equation}
c_m^{(t)} =\begin{cases}
c_{M/m}^{(t - 1)} - \delta, & L_m \le \left(c_{M/m}^{(t - 1)} - \delta\right) \\
c_{\max}, & L_m > \left(c_{M/m}^{(t - 1)} - \delta \right),
\end{cases}
\end{equation}
where
\begin{equation}
\begin{aligned}
L_m &= c_{\max}\frac{E_{m,\min}}{E_{m,\max}}, \\
E_{m,\max} &= \min\left(E_{absorb},\frac{e_{\max,m}}{\Delta}\right), \\
E_{m,\min} &= \max\left(E_{absorb} - \frac{e_{\max,M/m}}{\Delta},0\right).
\end{aligned}
\end{equation}

We assume, without loss of generality, that storage firm $m=1$ offers a price bid first, i.e., storage firm $1$ operates in even iteration indices of the best response dynamics ($t=2n$) and storage firm $2$ in odd indices ($t=2n+1$), with $n \in \mathbb{Z}_{\ge 0}$. 

Consider an arbitrary even iteration of the best response dynamics $t_0 = 2n_0$, where $n_0 \in \mathbb{Z}_{\ge 0}$, we prove by induction on $n$ that \\
\textbf{Induction proposition}: if $c_1^{(t_0)} - 2n\delta \ge L_1$ and $c_1^{(t_0)} - (2n + 1)\delta \ge L_2$ then $c_1^{(t_0 + 2n)} = c_1^{(t_0)} - 2n\delta$ and $c_2^{(t_0 + 2n + 1)} = c_1^{(t_0)} - (2n + 1)\delta$.

\textbf{Induction base (n=0)}: trivially, $c_1^{(t_0)} = c_1^{(t_0)} - 2n\delta$. Additionally, if $c_1^{(t_0)} - \delta \ge L_2$ then by Lemma~\ref{lem:optimal_c_diff} $c_2^{(t_0 + 1)} = c_1^{(t_0)} - \delta = c_1^{(t_0)} - (2n + 1)\delta$.

\textbf{Induction step $(n=k, n=k+1)$}:

\textit{Assumption for $n=k$}: assume that if $c_1^{(t_0)} - 2k\delta \ge L_1$ and $c_1^{(t_0)} - (2k + 1)\delta \ge L_2$ then $c_1^{(t_0 + 2k)} = c_1^{(t_0)} - 2k\delta$ and $c_2^{(t_0 + 2k + 1)} = c_1^{(t_0)} - (2k + 1)\delta$.

\textit{Proof for $n=k+1$}: we note that if the ``if'' condition in the induction proposition holds for $n=k+1$, i.e., $c_1^{(t_0)} - 2(k + 1)\delta \ge L_1$ and $c_1^{(t_0)} - (2(k + 1) + 1)\delta \ge L_2$ then it also holds for $n=k$, i.e., $c_1^{(t_0)} - 2k\delta \ge L_1$ and $c_1^{(t_0)} - (2k + 1)\delta \ge L_2.$ Therefore, based on the inductive assumption for $n=k$, $c_1^{(t_0 + 2k)} = c_1^{(t_0)} - 2k\delta$ and $c_2^{(t_0 + 2k + 1)} = c_1^{(t_0)} - (2k + 1)\delta$. Subtracting $\delta$ from both sides of the last equation yields $c_2^{(t_0 + 2k + 1)} - \delta  = c_1^{(t_0)} - 2(k + 1)\delta \ge L_1$, where the last inequality stems from the ``if'' condition of the induction proposition for $n=k+1$. Thus, by Lemma~\ref{lem:optimal_c_diff}, $c_1^{(t_0 + 2(k + 1))} = c_1^{(t_0)} - 2(k + 1)\delta$. Similarly, for storage firm $m=1$, subtracting $\delta$ from both sides of the last equation yields $c_1^{(t_0 + 2(k + 1))} - \delta  = c_1^{(t_0)} - (2(k + 1) + 1)\delta \ge L_2,$ where the last inequality stems from the ``if'' condition of the induction proposition for $n=k+1$. Thus, by Lemma~\ref{lem:optimal_c_diff}, $c_2^{(t_0 + 2(k + 1) + 1)} = c_1^{(t_0)} - (2(k + 1) + 1)\delta$.

Recall our assumption that storage firm $m=1$ offers the first price bid. Table~\ref{tbl:BestResponses} depicts all three possible options for relations between $c_1^{(0)}$ to $L_1$ and $c_1^{(0)} - \delta$ to $L_2$ and the best responses in iterations 1-3 based on Lemma~\ref{lem:optimal_c_diff} and the assumption that $c_{\max} - 2\delta \ge \max \{L_1,L_2\}$.

\begin{table}[htbp]
\caption{Price bids of storage firms $1$ and $2$ in iterations 1-3 of the best-response dynamics for three cases of parameter relations.}
\label{tbl:BestResponses}
\renewcommand{\arraystretch}{1.2}
\centering
\begin{tabular}{lL{1.8cm}ccc}
\toprule
Case & Parameter & ${c_2^{(1)}}$ & ${c_1^{(2)}}$ & ${c_2^{(3)}}$ \\
& relations &  \\
\midrule
1.1 & $c_1^{(0)} \ge L_1$, $c_1^{(0)} - \delta \ge L_2$ & $c_1^{(0)} - \delta$ & -- & --  \\
\midrule
1.2 & $c_1^{(0)} < L_1$, $c_1^{(0)} - \delta \ge L_2$ & $c_1^{(0)} - \delta$ & $c_{\max}$ & $c_{\max} - \delta$ \\
\midrule
1.3 & $c_1^{(0)} - \delta < L_2$ & $c_{\max}$ & $c_{\max} - \delta$ & $c_{\max} - 2\delta$ \\
\bottomrule
\end{tabular}
\end{table}

In all three cases, the ``if'' condition of the induction proposition holds true for $n=0$ and ${t_0} = {t_A}$, where either $t_A = 0$ (case 1.1) or $t_A = 2$ (cases 1.2 and 1.3). It follows from the induction proposition that the best responses in iterations $t \ge t_A$ monotonically decrease until
\begin{equation}
\label{eq:first_stop}
{c_1^{(t_A)}} - 2\left(t - t_A\right)\delta < \max\{L_1,L_2+ \delta\},
\end{equation}
i.e., for all $t_1,t_2$ in the range $\{t_A \le t \le t_A + \frac{c_1^{(t_A)} - \max\{L_1,L_2 + \delta\}}{2\delta}\}$ s.t. $t_1<t_2$ it holds that $c_m^{(t_1)} > c_m^{(t_2)}$.


We denote the first iteration in which inequality \eqref{eq:first_stop} holds as $\tilde t$. By Lemma~\ref{lem:optimal_c_diff}, there are two options: \\
Case 2.1: if $L_1 \ge L_2 + \delta$ then $c_1^{(\tilde t)} = c_{\max}$ and $c_2^{(\tilde t + 1)} = c_{\max} - \delta$.

Case 2.2: if $L_1 < L_2 + \delta$, then $c_2^{(\tilde t + 1)} = c_{\max}$, $c_1^{(\tilde t + 2)} = c_{\max} - \delta$ and $c_2^{(\tilde t + 3)} = c_{\max} - 2\delta$.

In both cases, since $c_{\max} - 2\delta \ge \max \{L_1,L_2\} $, again it follows the ``if'' condition of the induction proposition holds true for $n=0$ and $t_0 = t_B$, where either $t_B = \tilde t$ (case 2.1) or $t_B = \tilde t + 2$ (case 2.2). It follows from the induction proposition that the best responses again monotonically decrease until 
\begin{equation}
c_1^{(t_B)} - 2\left(t - t_B\right)\delta < \max\{L_1,L_2 + \delta\}.    
\end{equation}
 
Since $\max\{L_1,L_2\} \le c_1^{(t_A)} \le c_{\max}$ then there exists $\tilde t_1$ such that $c_1^{({\tilde t}_1)} = c_1^{(t_A)}$, and the best responses continue in an infinite loop.  
\end{proof}

The game dynamics that are formally described in the proof of Theorem~\ref{theo:infinite_loop} are illustrated in Fig.~\ref{fig:loop_illustration}. To explain the game dynamics we use the following example. Assume that we have $\Delta E_{absorb}$ excess energy in the system. Also, assume we have two storage firms, and that there is more excess energy in the system than the capacity of firm $2$ but less than the capacity of firm $1$, i.e., $e_{1,\max} > \Delta E_{absorb} > e_{2,\max}$.

Figure~\ref{fig:loop_illustration} depicts the optimal storage usage for different price bids. When the price bids are low, the system operator uses both storage systems at maximal capacity (see Area A). When firm $2$ increases its price bid, its storage usage decreases (see Area B). The same is true for firm $1$, but because there is more excess energy in the system than the capacity of storage $2$, then the operator must store $\Delta E_{absorb} - e_{2,\max}$ energy in storage~$1$ (see Area C). In between low and high price bids, there is a non-linear decrease in storage usage, due to the non-linearity of conventional generation's cost (see Area D). 
In Areas E1 and E2, the price bids of both firms are high, so the operator stores only the excess energy in the system ($\Delta E_{absorb}$). The question is, how it will split it between the two firms? When $c_1 < c_2$, all excess energy will be stored in storage $1$ (see Area E1). When  $c_1 > c_2$, storage $2$ will be used to the max, and $\Delta E_{absorb} - e_{2,\max}$ will be stored in storage $1$ (see Area E2). On the diagonal between Areas E1 and E2, when $c_1 = c_2$, we may assume that the energy will be split half-half or perhaps in proportion to their size. 

Figure~\ref{fig:loop_illustration} also depicts the game dynamics that lead to instability. Assume that in the first iteration of the game firm $1$ offers a price bid higher than $c_{\min}$, e.g., $c_1^{(0)} = c_{\max}$. The best reply of firm $2$ will be to offer a slightly lower price to get all the excess energy, i.e., $c_2^{(1)} = c_{\max} - \delta$. In response, firm $1$ will also offer a slightly lower price, i.e., $c_1^{(2)} = c_{\max} - 2\delta$. This will continue iteratively, until at some iteration $(k)$, firm $1$ could either decrease the price again by $\delta$ to get all of the excess energy and profit $c_1^{(k)}\cdot \Delta E_{absorb}$, or, since a $\Delta E_{absorb} - e_{2,\max}$ amount of energy is guaranteed to be sold to it, it can offer a maximal price and profit $c_{\max}\cdot (\Delta E_{absorb} - e_{2,\max})$. 
If $c_{\min} \cdot \Delta E_{absorb} <  c_{\max} \cdot (\Delta E_{absorb} - e_{2,\max})$, then firm $1$ will offer a maximal price. In return, firm $2$ will offer $c_{\max} - \delta$, which brings us to the starting point, and, as we prove for the general case in Theorem~\ref{theo:infinite_loop}, this loop will continue infinitely and the game will not converge to an equilibrium point. 

 \begin{figure*}[htbp]
     \centering
     \includegraphics[width=0.85\textwidth]{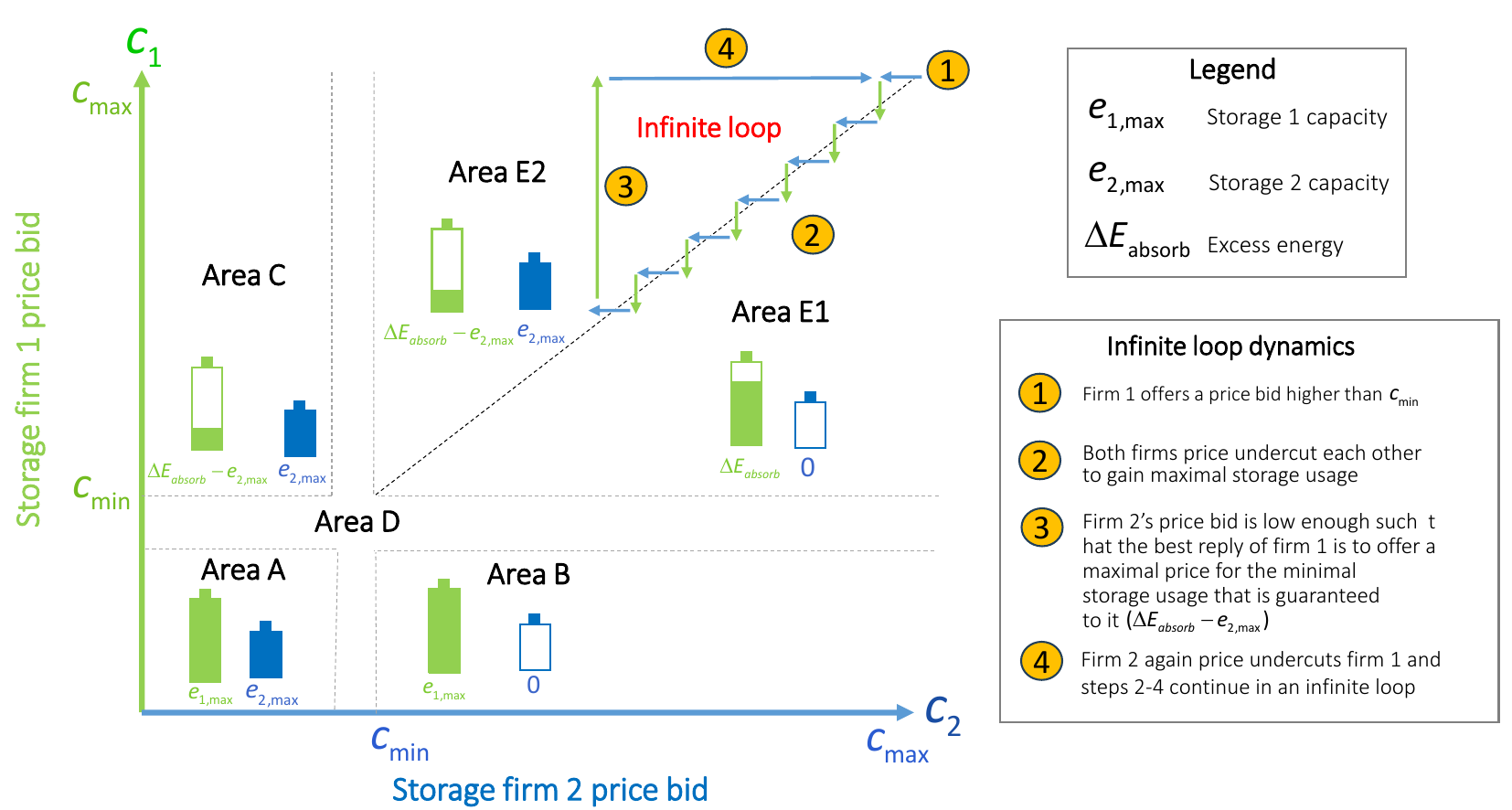}
     \caption{Illustration of the optimal storage usage per storage price bids and the game's best-reply dynamics that lead to an infinite loop.}
     \label{fig:loop_illustration}
 \end{figure*}

\section{Simulation Results}
\label{sec:sim_results}

We solve the game with a best-response algorithm (see Algorithm~\ref{alg:BR}) for a California case study. We examine two storage market structures, namely monopoly and duopoly, i.e., when one or two storage firms dominate the market, and a wide range of solar penetration, storage capacity, and system flexibility values, i.e., conventional power plants' minimal output. In the case of a Duopoly, we assume an asymmetric market, where the ratio of storage capacity of the first storage firm to the overall storage capacity, i.e., $ess_{frac}$, is $\frac{2}{3}$.

\begin{algorithm}
\caption{Best-response Algorithm}\label{alg:BR}
\begin{algorithmic}[1]
\State \textbf{Input: initial strategies $c_m^{(0)} \,  \forall m \in \{1, \ldots, M\}$, $x^{(0)}$ and simulation parameters (see Table~\ref{tbl:sim_param})} 
\State \textbf{Output: $c_m \forall m \in \{1, \ldots, M\}$ and $x$ at a NE (if exists)} 
\ForAll{$p_{solar} \in P_{solar} \bigtimes ess_{cap} \in ESS_{cap} \bigtimes p_{g,min} \in P_{g,min}$}
            \ForAll{$m \in \{1, \ldots, M\}$}
                \State $t = 0$ 
                \State $es_{max,m} = ESS_{frac}(m) \cdot ess_{cap}$
                \State Initialize strategies 
                \While{$t < $ maximum iterations} 
                    \State Set $x^{(t)}$ by solving problem~\eqref{eq:ISO_opt}.
                    \ForAll{$m \in \{1, \ldots, M\}$}
                        \State Set $c_m^{(t)}$ by solving problem~\eqref{eq:storage_opt}.
                    \EndFor
                    \If{$x^{(t)} = x^{(t-1)}$}
                    \If{$c_m^{(t)} = c_m^{(t-1)} \, \forall m \in \{1, \ldots, M\}$}
                        \State Return $x^{(t)}$ and $c_m^{(t)} = c_m^{(t-1)} \, \forall m \in \{1, \ldots, M\}$  \Comment{Converged to NE}
                        \EndIf
                    \EndIf
                    \State $t = t + 1$
                \EndWhile
            \EndFor
\EndFor
\end{algorithmic}
\end{algorithm}

The simulation was applied to California Independent System Operator (CAISO) load and generation data \cite{CAISO2020}. A typical meteorological year (TMY) was chosen, 2021, and daily load and generation profiles from March 13th, which had a combination of low demand and high solar generation. We study a range of solar penetration values that, on one hand, imply that the net load is lower than the minimal conventional generation, i.e., there is excess generated energy, and, on the other hand, enable a feasible solution, i.e., the daily sum of minimal conventional generation and solar generation is smaller than the overall daily energy demand. To this end, we vary the share of solar energy out of the overall daily energy demand, i.e., $p_{solar}$, from 30\% to 70\% with intervals of 10\%. Furthermore, we vary the share of energy storage capacity out of the excess generated solar energy, i.e., $ess_{cap}$, from 120\% to 300\% with intervals of 20\%. 

We examine low and high flexibility scenarios, where each flexibility scenario is defined by the minimal output of conventional power plants. Following the flexibility analysis in \cite{Arbabzadeh2019}, we assume a minimal generation that is 12.5\% and 25\% of the peak demand in the low and high flexibility scenarios, respectively. Simulation parameters are summarized in Table~\ref{tbl:sim_param}.

\begin{table}[H]
\caption{Parameters for the numeric analysis.}
\label{tbl:sim_param}
\renewcommand{\arraystretch}{1.2}
\centering
\begin{tabular}{ll}
\toprule
Parameter & Value \\
\midrule
$p_{solar}$ & 30\%-70\% with 10\% intervals \\
$ess_{frac}$ & $\frac{2}{3}$ if $M=2$ (Duopoly) and $1$ if $M=1$ (Monopoly) \\
$ess_{cap}$ & 120\%-300\% with 20\% intervals  \\
$\frac{p_{g,min}}{max(p_{L,k})}$ & 0.125, 0.25 \\
\bottomrule
\end{tabular}
\end{table}






Figure~\ref{fig:Price_outcome} provides an overview of our game model simulation results. It describes the cost of storage usage for both market structures and for combinations of low and high system flexibility and storage capacity values.

\begin{figure}[htbp]
\centering
\includegraphics[width=0.8\columnwidth]{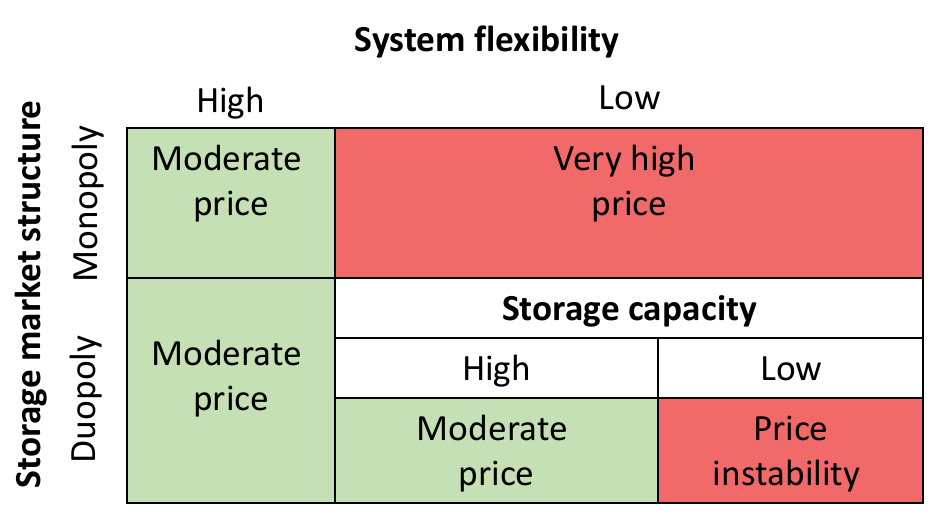}
\caption{Storage price bids at equilibrium for monopoly and duopoly market structures and for combinations of system flexibility and storage capacity values. Price instability means that the best reply dynamics do not converge to an equilibrium.}
\label{fig:Price_outcome}
\end{figure}

For both examined market structures, as long as the system flexibility is high enough, the price bids at equilibrium are moderate, i.e., they are equal to the marginal cost of conventional electricity generation, due to the competition between storage firms and power plants. Nonetheless, if system flexibility is low and there is excess generated electricity in the system, then in a monopoly, the storage firm requests a maximal price bid to store this excess energy. This maximal storage price bid is capped by either a price cap regulation or the cost of storage alternatives, e.g., curtailment or demand response. In a duopoly, if both the system flexibility and storage capacity are low, the price might be unstable, i.e., the game might not converge to an equilibrium (see Theorem~\ref{theo:infinite_loop} for the exact conditions).

Figure~\ref{fig:Instability} illustrates the existence of a stable solution (NE), corresponding to various combinations of solar and storage capacity values, for both high and low flexibility values. As we increase solar generation the system becomes less stable. However, an increase in storage capacity and system flexibility contributes to an improvement in system stability. These findings align with the analytical conditions for instability outlined in Theorem~\ref{theo:infinite_loop}.

We also find that the stability of the game can be improved without investing in system infrastructure, simply by changing the rules of the game. Figure~\ref{fig:instability_price_cap} presents the percentage of stable solutions, out of the same ensemble of scenarios that were examined in the experiment described above and presented in Fig.~\ref{fig:Instability}, but for 6 different price caps. As evident in Fig.~\ref{fig:instability_price_cap}, the share of stable solutions increases as the price cap decreases. 

\begin{figure}[htbp]
\centering
\includegraphics[width=\columnwidth]{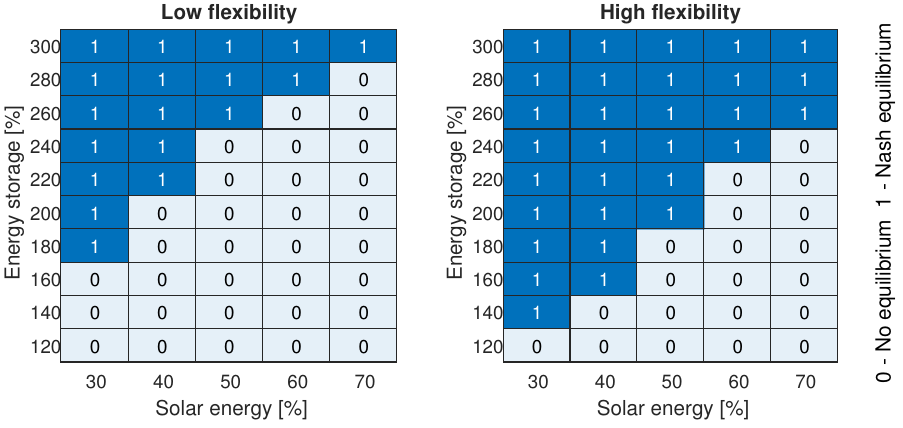}
\caption{Existence of a stable solution as a function of the share of solar energy and energy storage for low and high flexibility scenarios. The share of solar energy is expressed relative to the overall daily energy demand, while the share of storage capacity is expressed relative to the daily excess generated solar energy.}
\label{fig:Instability}
\end{figure}

\begin{figure}[htbp]
\centering
\includegraphics[width=0.8\columnwidth]{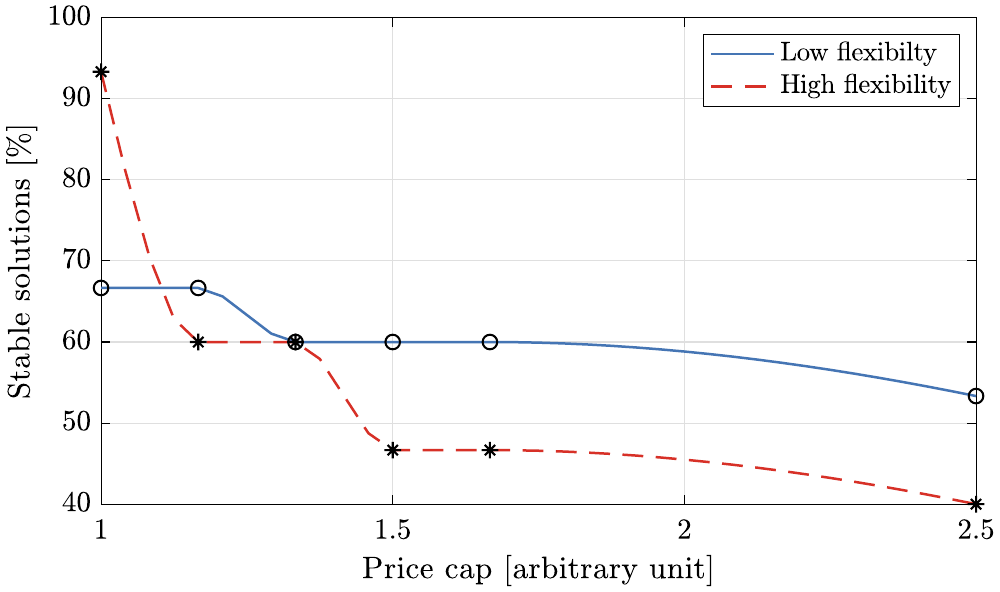}
\caption{The share of stable solutions as a function of the price cap on storage price bids for low and high flexibility scenarios.}
\label{fig:instability_price_cap}
\end{figure}



Finally, we analyze the profitability of utilizing energy storage for two different applications: energy balancing and energy arbitrage. To this end, three solar penetration scenarios (0\%, 5\%, and 10\% of daily demand) are considered, with the overall storage capacity set at 150\% of excess generated solar energy. Subplots (a)-(c) in Fig.~\ref{fig:storage_usage} depict heat maps showing the share of storage used for energy arbitrage relative to the capacity not used for balancing, against storage price bids. As solar penetration increases, the maximal price bid that the grid operator accepts for energy arbitrage decreases. In other words, the profitability of energy arbitrage diminishes with increasing solar penetration.

An explanation for this trend can be found in subplots (e)-(f) in Fig.~\ref{fig:storage_usage}, which show for each solar penetration scenario the power profiles of conventional generation plus storage in two cases. In case 1, the storage price bid is maximal hence storage usage is minimal (used for balancing purposes only), whereas in case 2, the storage price bid is minimal hence storage usage is maximal (used for both balancing and arbitrage). 

As can be seen in Fig.~\ref{fig:storage_usage}~(e)-(f), as solar generation increases, the net load is lower than the minimal conventional generation for a longer time. As a result, the usage of storage for energy arbitrage occurs during times in which the difference between the minimal and maximal demand is smaller, hence the profit from arbitrage is smaller.



This can also be explained from a mathematical point of view by looking at the grid operator's cost function:
\begin{multline}
\frac{1}{2}\sum_{k = 0}^{K - 1} \left(\sum_{m = 1}^M p_{m,k}\right)^2  + \sum_{k = 0}^{K - 1} \sum_{m = 1}^M p_{L,k}p_{m,k} \\
+ \sum_{k = 0}^{K - 1} \sum_{m = 1}^M \frac{c_{m}}{a}p_{chg,m,k}.
\end{multline}

Increasing storage usage increases the quadratic component of the cost function, which increases the cost, and the linear component, which can \textit{decrease} the cost if the sum is negative (which is obtained by charging during low demand and discharging during peak demand). As solar penetration increases, the maximal negative linear component that can be obtained is smaller, thus the profitability of energy arbitrage is smaller as well. For similar reasons, storage arbitrage is more profitable in systems with more flexibility, i.e., lower minimal conventional generation output, since the difference between the low and peak demand is higher.


\begin{figure*}[htbp]
\centering
\includegraphics[width=0.95\textwidth]{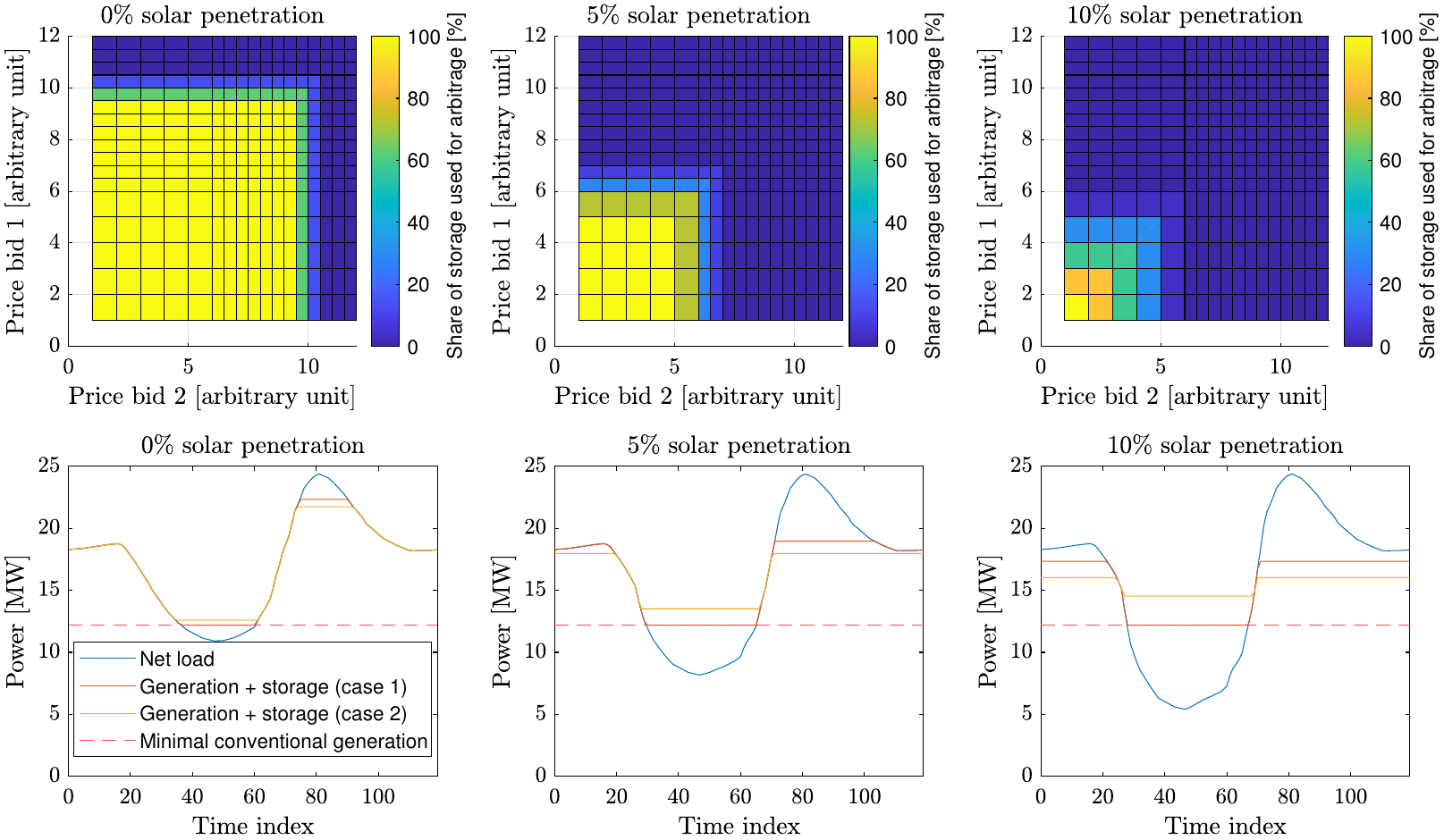}
\caption{Subplots (a)-(c) illustrate the percentage of storage allocated to energy arbitrage relative to the total available storage capacity at various storage price bids. The available storage capacity is calculated by deducting excess generated solar energy from the overall storage capacity. Subplots (d)-(f) display power profiles for each solar penetration scenario, presenting net load and conventional generation plus storage. Two cases are considered: (1) when the storage price bid is at its maximum (hence storage is used for balancing purposes only), and (2) when the storage price bid is at its minimum (reflecting storage usage for both balancing and arbitrage).}
\label{fig:storage_usage}
\end{figure*}


\section{Conclusion}
\label{sec:conclusion}

In this paper, we study storage competition in wholesale electricity markets through a non-cooperative game between a grid operator and energy storage firms. We focus on the impact of the operational flexibility of conventional power plants on storage competition, an aspect that previous studies overlooked. A main result is that storage competition games are not necessarily stable, i.e., under certain conditions, the system will not converge to an equilibrium point. We show through both theory and simulation that instability is caused by a combination of high renewable energy generation, low flexibility in conventional power plants, and low storage capacity. However, we demonstrate that price stability can be obtained by using a price cap on storage price bids. We also find that as renewable energy generation increases, while the usage of storage for energy balancing purposes increases, the usage of energy storage for energy arbitrage decreases. 

An interesting future research direction could be to study the impact of storage reserve mechanisms on storage competition in wholesale electricity markets. Additionally, while we consider alternatives to storage, e.g., demand response and curtailment, as an external input that determines the maximal price bids that storage firms can offer, incorporating these alternatives into the analysis can provide insights regarding the dynamics of such interactions. 
Finally, while we demonstrate that storage competition is unstable under certain conditions, studying the impact of low flexibility throughout the year to assess the annual impact on price stability and storage profit could be a valuable research effort. 




\vfill
\end{document}